\newcommand{\NP}{{\sf NP}}
\newcommand{\NPc}{{\sf NPc}}
\newcommand{\W}{{\sf W}}
\renewcommand{\P}{{\sf P}}
\newcommand{\pname}{\textsc}
\newcommand{\algMinSup}{\textsf{\textup{MinSup}}}
\newcommand{\algMaxInf}{\textsf{\textup{MaxInf}}}
\newcommand{\algWInt}{\textsf{\textup{wCOC}}}
\newcommand{\ucoc}{\textsf{\textup{uCOC}}}
\newcommand{\otilde}{O} 
\newcommand{\vi}{\pname{Vertex Integrity}}
\newcommand{\wvi}{\pname{Weighted Vertex Integrity}}
\newcommand{\coc}{\pname{Component Order Connectivity}}
\newcommand{\wcoc}{\pname{Weighted Component Order Connectivity}}
\newcommand{\wcc}{w_{\text{cc}}}
\DeclareMathOperator{\dpt}{dp}
\newtheorem{theorem}{Theorem}
\newtheorem{lemma}{Lemma}[section]
\theoremstyle{definition}
\theoremstyle{remark}
\title{On the Computational Complexity of Vertex Integrity\\%
  and Component Order Connectivity%
  \footnote{A preliminary version of this paper already appeared in
    the conference proceedings of
    ISAAC~2014~\cite{drange14onthe}.\newline%
    The research leading to these results has received funding from
    the Research Council of Norway, Bergen Research Foundation under
    the project Beating Hardness by Preprocessing and the European
    Research Council under the European Union's Seventh Framework
    Programme (FP/2007-2013) / ERC Grant Agreement n.~267959.
  }%
}
\author{P\aa l Gr\o n\aa s Drange \and Markus Sortland Dregi \and Pim
  van 't Hof}
\begin{document}

\maketitle

\begin{abstract}
  The \pname{Weighted Vertex Integrity} (wVI) problem takes as input
  an $n$-vertex graph $G$, a weight function $w:V(G)\rightarrow
  \mathbb{N}$, and an integer $p$.  The task is to decide if there
  exists a set $X\subseteq V(G)$ such that the weight of $X$ plus the
  weight of a heaviest component of $G-X$ is at most~$p$.  Among other
  results, we prove that:
\begin{enumerate}[(1)]
\item wVI is \NP-complete on co-comparability graphs, even if 
each vertex has weight~$1$;
\item wVI can be solved in $O(p^{p+1}n)$ time;
\item wVI admits a kernel with at most $p^3$ vertices.
\end{enumerate}
Result (1) refutes a conjecture by Ray and Deogun
(J.~Combin.~Math.~Combin.~Comput.\ 16: 65--73, 1994) and answers an
open question by Ray et al.~(Ars Comb.~79: 77--95, 2006).  It also
complements a result by Kratsch et al.~(Discr.~Appl.~Math.~77:
259--270, 1997), stating that the unweighted version of the problem
can be solved in polynomial time on co-comparability graphs of bounded
dimension, provided that an intersection model of the input graph is
given as part of the input.

An instance of the \pname{Weighted Component Order Connectivity}
(wCOC) problem consists of an $n$-vertex graph $G$, a weight function
$w:V(G)\rightarrow \mathbb{N}$, and two integers $k$ and $\ell$, and
the task is to decide if there exists a set $X\subseteq V(G)$ such
that the weight of $X$ is at most~$k$ and the weight of a heaviest
component of $G-X$ is at most~$\ell$.  In some sense, the wCOC problem
can be seen as a refined version of the wVI problem.  We obtain
several classical and parameterized complexity results on the wCOC
problem, uncovering interesting similarities and differences between
wCOC and wVI.  We prove, among other results, that:
\begin{enumerate}[(4)]
\item[(4)] wCOC can be solved in $O(\min\{k,\ell\}\cdot n^3)$ time on
  interval graphs, while the unweighted version can be solved in
  $O(n^2)$ time on this graph class;
\item[(5)] wCOC is \W[1]-hard on split graphs when parameterized by $k$ or by $\ell$;
\item[(6)] wCOC can be solved in $2^{O(k\log \ell)} n$ time;
\item[(7)] wCOC admits a kernel with at most $k\ell(k+\ell)+k$
  vertices.
\end{enumerate}
We also show that result (6) is essentially tight by proving that wCOC
cannot be solved in $2^{o(k \log \ell)}n^{O(1)}$ time, even when
restricted to split graphs, unless the Exponential Time Hypothesis
fails.
\end{abstract}

\section{Introduction}
\label{s-intro}

Motivated by a multitude of practical applications, many different
vulnerability measures of graphs have been introduced in the
literature over the past few decades.  The vertex and edge
connectivity of a graph, although undoubtedly being the most
well-studied of these measures, often fail to capture the more subtle
vulnerability properties of networks that one might wish to consider,
such as the number of resulting components, the size of the largest or
smallest component that remains, and the largest difference in size
between any two remaining components.  The two vulnerability measures
we study in this paper, \emph{vertex integrity} and \emph{component
  order connectivity}, take into account not only the number of
vertices that need to be deleted in order to break a graph into
pieces, but also the number of vertices in the largest component that
remains.

The \emph{vertex integrity} of an unweighted graph $G$ is defined as
$\iota(G) = \min\{|X|+n(G-X) \mid X\subseteq V(G)\}$, where $n(G-X)$
is the number of vertices in the largest connected component of $G-X$.
This vulnerability measure was introduced by Barefoot, Entringer, and
Swart~\cite{BarefootES87} in 1987.  For an overview of structural
results on vertex integrity, including combinatorial bounds and
relationships between vertex integrity and other vulnerability
measures, we refer the reader to a survey on the subject by Bagga et
al.~\cite{bagga1992survey}.  We mention here only known results on the
computational complexity of determining the vertex integrity of a
graph.

The \pname{Vertex Integrity} (VI) problem takes as input an
$n$-vertex graph $G$ and an integer~$p$, and asks whether
$\iota(G)\leq p$.  This problem was shown to be \NP-complete, even
when restricted to planar graphs, by Clark, Entringer, and
Fellows~\cite{ClarkEF87}.  On the positive side, Fellows and
Stueckle~\cite{fellows1989immersion} showed that the problem can be
solved in $O(p^{3p} n)$ time, and is thus fixed-parameter tractable
when parameterized by~$p$.  In the aforementioned survey, Bagga et
al.~\cite{bagga1992survey} mention that \vi{} can be solved in
$O(n^3)$ time when the input graph is a tree or a cactus graph.
Kratsch, Kloks, and M\"{u}ller~\cite{KratschKM97} studied the
computational complexity of determining the value of several
vulnerability measures in classes of intersection graphs.  Their
results imply that \vi{} can be solved in $O(n^3)$ time on interval
graphs, in $O(n^4)$ time on circular-arc graphs, and in $O(n^5)$ time
on permutation graphs and trapezoid graphs.  Kratsch et
al.~\cite{KratschKM97} also mention that the problem can be solved in
$O(n^{2d+1})$ time on co-comparability graphs of dimension at
most~$d$, provided that an intersection model of the input graph is
given as part of the input.

Ray and Deogun~\cite{RayD94} were the first to study the more general
\wvi{} (wVI) problem.  This problem takes as input an $n$-vertex graph
$G$, a weight function $w:V(G)\rightarrow \mathbb{N}$, and an integer
$p$.  The task is to decide if there exists a set $X\subseteq V(G)$
such that the weight of $X$ plus the weight of a heaviest component of
$G-X$ is at most~$p$.  Using a reduction from 0-1 \pname{Knapsack},
Ray and Deogun~\cite{RayD94} identified several graph classes on which
the \pname{Weighted Vertex Integrity} problem is weakly \NP-complete.
In particular, their result implies that the problem is weakly
\NP-complete on trees, bipartite graphs, series-parallel graphs, and
regular graphs, and therefore also on superclasses such as chordal
graphs and comparability graphs.  A common property of these classes
is that they contain graphs with arbitrarily many asteroidal triples
and induced paths on five vertices; any graph class that does not have
this property is not covered by the result of Ray and Deogun.  They
conjectured that the \wvi{} problem can be solved in polynomial time
on co-comparability graphs, a well-known example of a class of graphs
that do not contain asteroidal triples at all.  More than a decade
later, Ray et al.~\cite{ray2006weighted} presented a polynomial-time
algorithm for \pname{Weighted Vertex Integrity} on interval graphs, a
subclass of co-comparability graphs.  In the same paper, they pointed
out that the complexity of the problem on co-comparability graphs
remained unknown.

We now turn our attention to the second vulnerability measure studied
in this paper.  For any positive integer $\ell$, the
\emph{$\ell$-component order connectivity} of a graph $G$ is defined
to be the cardinality of a smallest set $X\subseteq V(G)$ such that
$n(G-X)<\ell$.  We refer to the survey by Gross et al.~\cite{Gross+13}
for more background on this graph parameter.  Motivated by the
definitions of $\ell$-component order connectivity and the
\pname{Weighted Vertex Integrity} problem, we introduce the
\pname{Weighted Component Order Connectivity} (wCOC) problem.  This
problem takes as input a graph $G$, a weight function
$w:V(G)\rightarrow \mathbb{N}$, and two integers $k$ and $\ell$.  The
task is to decide if there exists a set $X\subseteq V(G)$ such that
the weight of $X$ is at most~$k$ and the weight of a heaviest
component of $G-X$ is at most~$\ell$.  Observe that the \wcoc{}
problem can be interpreted as a more refined version of \wvi.  We
therefore find it surprising that, to the best of our knowledge, the
\wcoc{} problem has not yet been studied in the literature.  We do
however point out that the techniques described by Kratsch et
al.~\cite{KratschKM97} yield polynomial-time algorithms for the
unweighted version of the problem on interval graphs, circular-arc
graphs, permutation graphs, and trapezoid graphs, and that very
similar problems have received some attention
recently~\cite{Ben-AmeurMN13,Gross+13}.

\medskip
\noindent {\bf Our Contribution.} In Section~\ref{s-vi}, we present
our results on \vi{} and \wvi{}.  We show that VI is \NP-complete on
co-bipartite graphs, and hence on co-comparability graphs.  This
refutes the aforementioned conjecture by Ray and Deogun~\cite{RayD94}
and answers an open question by Ray et al.~\cite{ray2006weighted}.  It
also forms an interesting contrast with the result by Kratsch et
al.~\cite{KratschKM97} stating that VI can be solved in $O(n^{2d+1})$
time on co-comparability graphs of dimension at most~$d$ if an
intersection model is given as part of the input.  We also show that
even though VI can be solved in linear time on split graphs, the
problem remains \NP-complete on chordal graphs.  Interestingly, we
prove that unlike the unweighted variant of the problem, the wVI
problem is \NP-complete when restricted to split graphs; observe that
this does not follow from the aforementioned hardness result by Ray
and Deogun~\cite{RayD94}, as split graphs do not contain induced paths
on five vertices.

Recall that Fellows and Stueckle~\cite{fellows1989immersion} showed
that VI can be solved in $O(p^{3p}n)$ time on general graphs.  We
strengthen this result by showing that even the wVI problem can be
solved in $O(p^{p+1} n)$ time.  We also show that wVI admits a kernel
with at most $p^3$ vertices, each having weight at most~$p$.

\setlength{\tabcolsep}{10pt} 
\renewcommand{\arraystretch}{1.1} 
\begin{table}[htb]
  \centering
  \begin{tabular}{| l | c | c | c | c |}
    \hline
    & VI & wVI & COC & wCOC \\
    \hline
    general & \NPc~\cite{ClarkEF87} & \NPc~\cite{ClarkEF87} & \NPc~\cite{ClarkEF87} & \NPc~\cite{ClarkEF87}\\
    co-bipartite & \NPc & \NPc & \NPc & \NPc \\
    chordal & \NPc & \NPc & \NPc & \NPc \\
    split     & $O(n+m)$~\cite{LiZZ08}  & \NPc  & \NPc & \NPc \\
    interval & $O(n^3)$~\cite{KratschKM97} & $O(n^6\log
    n)$~\cite{ray2006weighted}  & $O(n^2)$ & $O(\min\{k,\ell\}\cdot
    n^3)$ \\
    complete & $O(n)$ & $O(n)$ & $O(n)$ & weakly \NPc\\
    \hline
  \end{tabular}
  \vspace{.2cm}
  \caption{An overview of the classical complexity results proved in this
    paper.   Previously known results are given with a reference.}
  \label{t-results}
\end{table}

Section~\ref{s-coc} contains our results on \coc{} and \wcoc{}.  The
observation that there is a polynomial-time Turing reduction from VI
to COC implies that
the latter problem cannot be solved in polynomial time on any graph
class for which VI is \NP-complete, unless \P=\NP.
We prove that wCOC is weakly \NP-complete already on complete graphs,
while the unweighted variant of the problem, which is trivial on
complete graphs, remains \NP-complete when restricted to split graphs.
We find the latter result particularly interesting in light of
existing polynomial-time algorithms for computing similar (unweighted)
vulnerability measures of split graphs, such as
toughness~\cite{Woeginger98}, vertex integrity, scattering number,
tenacity, and rupture degree~\cite{LiZZ08}.  To complement our
hardness results, we present a pseudo-polynomial-time algorithm that
solves the wCOC problem in $O(\min\{k,\ell\}\cdot n^3)$ time on
interval graphs.  We then modify this algorithm to solve the
unweighted version of the problem in $O(n^2)$ time on interval graphs,
thereby improving the $O(n^3)$-time algorithm that follows from the
results by Kratsch et al.~\cite{KratschKM97}.  Observe that the
aforementioned hardness results rule out the possibility of solving
wCOC in polynomial time on interval graphs or in pseudo-polynomial
time on split graphs, unless \P{} = \NP.

In Section~\ref{s-coc}, we also completely classify the parameterized
and kernelization complexity of COC and wCOC on general graphs with
respect to the parameters $k$, $\ell$, and $k+\ell$.  We first observe
that both problems are para-\NP-hard when parameterized by~$\ell$ due
to the fact that COC is equivalent to \pname{Vertex Cover} when $\ell
= 1$.  We then prove that if we take either $k$ or $\ell$ to be the
parameter, then COC is $\W[1]$-hard even on split graphs.  On the
positive side, we show that wCOC becomes fixed-parameter tractable
when parameterized by $k+\ell$.  We present an algorithm for solving
the problem in time $2^{O(k\log \ell)} n$ time, before proving that
the problem cannot be solved in time $2^{o(k\log \ell)} n^{O(1)}$
unless the Exponential Time Hypothesis fails.  Finally, we show that
wCOC admits a polynomial kernel with at most $k\ell(k+\ell)+k$
vertices, where each vertex has weight at most~$k+\ell$.

\section{Preliminaries}

All graphs considered in this paper are finite, undirected, and
simple.  We refer to the monograph by Diestel~\cite{Diestel05} for
graph terminology and notation not defined here.  For more information
on parameterized complexity and kernelization, we refer to the book by
Downey and Fellows~\cite{ParameterizedComplexity}.  For definitions
and characterizations of the graph classes mentioned in this paper, as
well as the inclusion relationships between those classes, we refer to
the survey by Brandst\"{a}dt, Le, and Spinrad~\cite{BrandstadtLS99}.
Whenever we write that a (weighted) problem is \NP-complete, we mean
strongly \NP-complete, unless specifically stated otherwise.

Let $G$ be a graph and $w:V(G)\rightarrow \mathbb{N} = \{0,1,\ldots\}$ a weight
function on the vertices of~$G$.  The weight of a subset $X\subseteq
V(G)$ is defined as $w(X) = \sum_{v\in X} w(v)$.
We define $\wcc(G)$ to be the weight of a heaviest component of $G$,
i.e., $\wcc(G) = \max\{ w(V(G_i)) \mid 1\leq i\leq r\}$, where
$G_1,\ldots,G_r$ are the components of $G$.  The \emph{weighted vertex
  integrity} of $G$ is defined as
\[ 
\iota(G) = \min\{w(X)+\wcc(G-X) \mid X\subseteq V(G)\} \, ,
\]
where $G-X$ denotes the graph obtained from $G$ by deleting all the
vertices in $X$.  Any set $X\subseteq V(G)$ for which $w(X)+\wcc(G-X)
= \iota(G)$ is called an \emph{$\iota$-set} of~$G$.  We consider the
following two decision problems:

\medskip
\indent \pname{Weighted Vertex Integrity} (wVI)\\
\indent \emph{Instance:} \hspace*{-.01cm} A graph $G$, a weight function
$w:V(G)\rightarrow \mathbb{N}$, and an integer~$p$.\\
\indent \emph{Question:} Is $\iota(G) \leq p$?

\medskip
\indent \pname{Weighted Component Order Connectivity} (wCOC)\\
\indent \emph{Instance:} \hspace*{-.01cm} A graph $G$, a weight function
$w:V(G)\rightarrow \mathbb{N}$, and two integers~$k$ and~$\ell$.\\
\indent \emph{Question:} Is there a set $X\subseteq V(G)$ with $w(X)\leq k$ such
that $\wcc(G-X) \leq \ell$?

\medskip
\noindent
The unweighted versions of these two problems, where $w(v) = 1$ for
every vertex $v\in V(G)$, are called \pname{Vertex Integrity} (VI)
and \pname{Component Order Connectivity} (COC), respectively.

A \emph{bipartite graph} is a graph whose vertex set can be
partitioned into two independent sets, and a graph $G$ is
\emph{co-bipartite} if its complement $\overline{G}$ is bipartite.  A
\emph{split graph} is a graph whose vertex set can be partitioned into
a clique $C$ and an independent set $I$; such a partition $(C,I)$ is
called a \emph{split partition}.  A split graph $G$ with split
partition $(C,I)$ and edge set $E$ is denoted by $G = (C,I,E)$.  Note
that, in general, a split graph can have more than one split
partition.
A graph is \emph{chordal} if it has no induced cycle of length more
than~$3$.  Let ${\cal F}$ be a family of non-empty sets.  The
\emph{intersection graph} of ${\cal F}$ is obtained by representing
each set in ${\cal F}$ by a vertex and making two vertices adjacent if
and only if their corresponding sets intersect.  A graph is an
\emph{interval graph} if it is the intersection graph of intervals on
the real line.  A \emph{comparability graph} is a graph that admits a
transitive orientation, that is, a graph whose edges can be directed
in such a way that whenever $(u,v)$ and $(v,w)$ are directed edges,
then so is $(u,w)$.  A graph $G$ is a \emph{co-comparability graph} if
its complement $\overline{G}$ is a comparability graph.

It is well know that split graphs and interval graphs form two
incomparable subclasses of chordal graphs~\cite{BrandstadtLS99}.
Every bipartite graph is a comparability graph, as directing all the
edges of a bipartite graph from one bipartition class to the other
yields a transitive orientation.  Consequently, co-bipartite graphs
form a subclass of co-comparability graphs.  Interval graphs form
another subclass of co-comparability graphs; this readily follows from
the fact that co-comparability graphs are exactly the intersection
graphs of continuous real-valued functions over some interval
$I$~\cite{GolumbicRU83}.

A \emph{parameterized problem} is a subset $Q\subseteq \Sigma^*\times \mathbb{N}$ for
some finite alphabet $\Sigma$, where the second part of the input is
called the \emph{parameter}.  A parameterized problem $Q\subseteq
\Sigma^*\times \mathbb{N}$ is said to be \emph{fixed-parameter
  tractable} if for each pair $(x,k)\in \Sigma^*\times \mathbb{N}$ it
can be decided in time $f(k)\, |x|^{O(1)}$ whether $(x,k)\in Q$, for
some function $f$ that only depends on $k$; here, $|x|$ denotes the
length of input $x$.  We say that a parameterized problem $Q$ has a
\emph{kernel} if there is an algorithm that transforms each instance
$(x,k)$ in time $(|x|+k)^{O(1)}$ into an instance $(x',k')$, such that
$(x,k)\in Q$ if and only if $(x',k')\in Q$ and $|x'|+k'\leq g(k)$ for
some function $g$.
If $g$ is a polynomial, then we say that the problem has a
\emph{polynomial} kernel.

\section{Vertex Integrity}
\label{s-vi}

As mentioned in the introduction, Ray et al.~\cite{ray2006weighted}
asked whether \pname{Weighted Vertex Integrity} can be solved in
polynomial time on co-comparability graphs.  We show that this is not
the case, unless $\P = \NP$.  In fact, we prove a much stronger result
in Theorem~\ref{t-cobipartite} below by showing \NP-completeness of an
easier problem (\pname{Vertex Integrity}) on a smaller graph class
(co-bipartite graphs).

\begin{theorem}
  \label{t-cobipartite}
  \vi{} is \NP-complete on co-bipartite graphs.
\end{theorem}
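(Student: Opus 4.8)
The plan is to reduce from the \pname{Balanced Complete Bipartite Subgraph} problem, which asks, given a bipartite graph $H$ with parts $A$ and $B$ together with an integer $t$, whether $H$ contains $K_{t,t}$ as a subgraph; this problem is well known to be \NP-complete. Membership of \vi{} in \NP{} is immediate on every graph class, since one can guess a set $X$ and check $|X| + n(G-X) \le p$ in polynomial time, so all of the work lies in the reduction.

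Given an instance $(H,t)$ with $n = |A| + |B|$, I would build a co-bipartite graph $G$ on the same vertex set by turning each of $A$ and $B$ into a clique and, for $a \in A$ and $b \in B$, inserting the edge $ab$ into $G$ exactly when $ab \notin E(H)$. By construction $\overline{G} = H$, so $G$ is indeed co-bipartite, and the whole construction is clearly polynomial. I then set $p = n - t$.

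The heart of the argument is a closed formula for $\iota(G)$ in terms of $H$. The key structural observation is that for any $X$, writing $A' = A \setminus X$ and $B' = B \setminus X$, the graph $G - X$ consists of the two cliques induced by $A'$ and $B'$, which lie in a common component precisely when $G$ has at least one edge between them. If they share a component then $n(G-X) = |A'| + |B'|$ and the cost $|X| + n(G-X)$ telescopes to exactly $n$; otherwise $A'$ and $B'$ are distinct components, there is no $G$-edge between them (equivalently $(A',B')$ is a biclique of $H = \overline{G}$), and the cost equals $(|A| - |A'|) + (|B| - |B'|) + \max(|A'|,|B'|) = n - \min(|A'|,|B'|)$. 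Minimising over all choices of $X$ therefore gives $\iota(G) = n - \mu(H)$, where $\mu(H)$ is the largest $s$ such that $H$ contains $K_{s,s}$ (the empty biclique recovers the baseline cost $n$). Consequently $\iota(G) \le p = n - t$ if and only if $H$ contains $K_{t,t}$, which establishes correctness of the reduction, and hence \NP-completeness of \vi{} on co-bipartite graphs.

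The main obstacle is getting this cost formula exactly right, and in particular isolating the two features that make co-bipartite instances so clean: after any deletion the surviving graph is a union of at most two cliques, and keeping it connected always costs precisely $n$, so the only way to improve on that baseline is to split it into a balanced pair of cliques. Once this is in place, the identity $-|A'| - |B'| + \max(|A'|,|B'|) = -\min(|A'|,|B'|)$ makes the equivalence with balanced bicliques transparent, and the remaining verification is routine.
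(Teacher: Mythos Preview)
Your proposal is correct and takes essentially the same approach as the paper: both reduce from \pname{Balanced Complete Bipartite Subgraph} by passing to the complement and setting the target to $n-t$, and your exact formula $\iota(G)=n-\mu(H)$ is simply a tidier packaging of the two implications the paper verifies directly. The only cosmetic difference is that you derive the closed-form cost $n-\min(|A'|,|B'|)$ first and then optimise, whereas the paper argues each direction separately; the underlying reduction and correctness argument are identical.
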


\begin{proof}
  The problem is clearly in \NP.  To show that it is \NP-hard, we give
  a polynomial-time reduction from the \pname{Balanced Complete
    Bipartite Subgraph} problem.  This problem, which is known to be
  \NP-complete~\cite{garey1979computers}, takes as input a bipartite
  graph $G = (A,B,E)$ and an integer $k\geq 1$, and asks whether there
  exist subsets $A'\subseteq A$ and $B'\subseteq B$ such that $|A'| =
  |B'| = k$ and $G[A'\cup B']$ is a complete bipartite graph.  Let
  $(G,k)$ be an instance of \pname{Balanced Complete Bipartite
    Subgraph}, where $G = (A,B,E)$ is a bipartite graph on $n$
  vertices.  We claim that $(G,k)$ is a yes-instance of
  \pname{Balanced Complete Bipartite Subgraph} if and only if
  $(\overline{G},n-k)$ is a yes-instance of \vi{}.

  Suppose there exist subsets $A'\subseteq A$ and $B'\subseteq B$ such that $|A'| =
  |B'| = k$ and $A'\cup B'$ induces a complete bipartite subgraph in
  $G$.  Observe that in $\overline{G}$, both $A'$ and $B'$ are
  cliques, and there is no edge between $A'$ and $B'$.  Hence, if we
  delete all the vertices in $V(\overline{G})\setminus (A'\cup B')$
  from $\overline{G}$, the resulting graph has exactly two components
  containing exactly $k$ vertices each.  Since
  $|V(\overline{G})\setminus (A'\cup B')| = n-2k$, it holds that
  $\iota(\overline{G}) \leq n - 2k + k = n - k$, and hence
  $(\overline{G},n-k)$ is a yes-instance of \vi{}.

  For the reverse direction, suppose $(\overline{G},n-k)$ is a
  yes-instance of \vi{}.  Then there exists a subset $X \subseteq
  V(\overline{G})$ such that $|X|+n(\overline{G}-X) \leq n-k$.  The
  assumption that $k\geq 1$ implies that $\overline{G}-X$ is
  disconnected, as otherwise $|X|+n(\overline{G}-X) = V(\overline{G})
  = n$.  Let $A' = A \setminus X$ and $B' = B \setminus X$.  Since
  $\overline{G}$ is co-bipartite, both $A'$ and $B'$ are cliques.
  Moreover, since $\overline{G}-X$ is disconnected, there is no edge
  between $A'$ and $B'$.  Hence, $\overline{G}[A']$ and
  $\overline{G}[B']$ are the two components of $\overline{G}-X$.
  Without loss of generality, suppose that $|A'| \geq |B'|$.  Then
  $|B'| = n - (|X|+|A'|) = n - (|X| + n(\overline{G} - X)) \geq n -
  (n-k) = k$ and hence $|A'| \geq |B'| \geq k$.  This, together with
  the observation that $A'\cup B'$ induces a complete bipartite
  subgraph in $G$, implies that $(G,k)$ is a yes-instance of
  \pname{Balanced Complete Bipartite Subgraph}.
\end{proof}

Ray and Deogun~\cite{RayD94} proved that \pname{Weighted Vertex
  Integrity} is \NP-complete on any graph class that satisfies certain
conditions.  Without explicitly stating these (rather technical)
conditions here, let us point out that any graph class satisfying
these conditions must contain graphs with arbitrarily many asteroidal
triples and induced paths on five vertices.
Theorem~\ref{t-cobipartite} shows that neither of these two properties
is necessary to ensure \NP-completeness of \pname{Weighted Vertex
  Integrity}, since co-bipartite graphs contain neither asteroidal
triples nor induced paths on five vertices.

In Theorem~\ref{t-splithard} below, we show that \wvi{} is
\NP-complete on split graphs.  Since split graphs do not contain
induced paths on five vertices, this graph class is not covered by the
aforementioned hardness result of Ray and Deogun~\cite{RayD94}.

\begin{lemma}
  \label{l-simplicial}
  For every graph $G$ and weight function $w:V(G)\rightarrow \mathbb{N}$, there
  exists an $\iota$-set $X$ that contains no simplicial vertices of $G$.
\end{lemma}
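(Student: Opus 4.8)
The plan is to show that a single simplicial vertex can always be evicted from an $\iota$-set without increasing the objective value $w(X)+\wcc(G-X)$, and then to iterate. Concretely, I would take an $\iota$-set $X$ of $G$ and, assuming it contains a simplicial vertex $v$, set $X' = X\setminus\{v\}$. The graph $G-X'$ is then obtained from $G-X$ simply by adding $v$ back together with its incident edges, so $w(X') = w(X)-w(v)$.

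The key observation---and the only place where simpliciality is used---is that reinserting $v$ cannot merge two distinct components of $G-X$. Indeed, since $v$ is simplicial, its neighborhood $N_G(v)$ is a clique, so the surviving neighbors $N_G(v)\setminus X$ are pairwise adjacent and therefore all lie in a single component $C$ of $G-X$. Hence adding $v$ either enlarges the single component $C$ by $w(v)$ (if $N_G(v)\setminus X\neq\emptyset$) or creates a new singleton component of weight $w(v)$; in both cases no two old components are joined.

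From this I would conclude that $\wcc(G-X') \leq \wcc(G-X)+w(v)$, and combining this with $w(X')=w(X)-w(v)$ gives
\[
w(X') + \wcc(G-X') \;\leq\; w(X) + \wcc(G-X) \;=\; \iota(G).
\]
Since $\iota(G)$ is by definition the minimum over all vertex subsets, equality must hold, so $X'$ is again an $\iota$-set. Because the set of simplicial vertices of $G$ is fixed and does not depend on the chosen deletion set, $X'$ contains strictly fewer simplicial vertices than $X$. Iterating this step---formally, by induction on the number of simplicial vertices contained in the $\iota$-set, or equivalently by choosing at the outset an $\iota$-set minimizing this quantity---yields an $\iota$-set disjoint from the simplicial vertices of $G$.

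The main (indeed, essentially the only) obstacle is establishing the non-merging property in the second paragraph; everything else is routine bookkeeping. I expect no difficulty with the edge cases $w(v)=0$ or $N_G(v)\setminus X = \emptyset$, since the displayed inequality remains valid verbatim in both situations.
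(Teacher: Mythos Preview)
Your proposal is correct and follows essentially the same approach as the paper: evict a single simplicial vertex $v$ from an $\iota$-set, use the fact that $N_G(v)\setminus X$ lies in at most one component of $G-X$ to bound $\wcc(G-X')\le \wcc(G-X)+w(v)$, and conclude that $X'=X\setminus\{v\}$ is again an $\iota$-set. The paper presents only the single eviction step and leaves the iteration implicit, whereas you make the induction explicit; this is the sole difference and is purely presentational.
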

\begin{proof}
  Let $w:V(G)\rightarrow \mathbb{N}$ be a weight function of a graph $G$, and
  let $X$ be an $\iota$-set of $G$ containing a simplicial vertex $s$.
  Observe that $s$ is adjacent to at most one component of $G-X$.  Let
  $X' = X\setminus \{s\}$.  We claim that $X'$ is an $\iota$-set of $G$.

  Let $G_1,\ldots,G_r$ denote the components of $G-X'$, and without
  loss of generality assume that $s\in V(G_1)$.  The fact that $s$ is
  a simplicial vertex of $G$ implies that $G_2,\ldots,G_r$ are
  components of $G-X$ as well.  Hence $\wcc(G_i) \leq \wcc(G-X)$ for
  every $i\in \{2,\ldots,r\}$.  Let us determine an upper bound on
  $\wcc(G_1)$.  If $s$ is adjacent to a component $H$ in the graph
  $G-X$, then $\wcc(G_1) = \wcc(H) + w(s) \leq \wcc(G-X)+w(s)$.
  Otherwise, $s$ is an isolated vertex in $G-X'$, implying that
  $\wcc(G_1) = w(s)$.  We find that $\wcc(G-X') = \max\{\wcc(G_i) \mid
  1\leq i\leq r\}\leq \wcc(G-X)+w(s)$.  Consequently,
  $w(X')+\wcc(G-X')\leq (w(X)-w(s))+(\wcc(G-X)+w(s)) = w(X)+\wcc(G-X)
  = \iota(G)$, where the last equality follows from the assumption
  that $X$ is an $\iota$-set of $G$.  We conclude that $X'$ is an
  $\iota$-set of~$G$.
\end{proof}

Given a graph~$G$, the \emph{incidence split graph} of~$G$ is the
split graph $G^* = (C^*,I^*,E^*)$ whose vertex set consists of a
clique $C^* = \{v_x \mid x\in V(G)\}$ and an independent set $I^* =
\{v_e \mid e\in E(G)\}$, and where two vertices $v_x\in C^*$ and
$v_e\in I^*$ are adjacent if and only if the vertex~$x$ is incident
with the edge~$e$ in~$G$.  The following lemma will be used in the
proofs of hardness results not only in this section, but also in
Section~\ref{s-coc}.

\begin{lemma}
\label{l-incidence}
Let $G = (V,E)$ be a graph, $G^* = (C^*,I^*,E^*)$ its incidence split
graph, and $k<|V|$ a non-negative integer.  Then the following
statements are equivalent:
\begin{enumerate}[(i)]
\item $G$ has a clique of size~$k$;
\item there exists a set $X\subseteq C^*$ such that $|X| \leq k$ and
  $|X|+n(G^*-X)\leq |V|+|E|-\binom{k}{2}$;
\item there exists a set $X\subseteq C^*$ such that $|X|\leq k$ and $n(G^*-X)\leq
  |V|+|E|-\binom{k}{2}-k$.
\end{enumerate}
\end{lemma}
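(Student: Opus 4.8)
The plan is to reduce all three statements to a single clean combinatorial condition about subsets of $V(G)$ by first working out the component structure of $G^*-X$ for any $X\subseteq C^*$. Since $X$ consists only of clique vertices, I would set up the bijective correspondence $X = \{v_x \mid x\in S\}$ for some $S\subseteq V$, so that $|X|=|S|$, and track what happens to the independent-set vertices. The key observation is that an edge-vertex $v_e$ with $e=xy$ is adjacent in $G^*$ precisely to $v_x$ and $v_y$; hence after deleting $X$, the vertex $v_e$ becomes isolated exactly when both endpoints of $e$ lie in $S$, and otherwise it remains attached to a surviving clique vertex. Because $|X|\le k<|V|$, the set $C^*\setminus X$ is a non-empty clique, so all surviving clique vertices together with every edge-vertex having at least one endpoint outside $S$ form a single connected component.

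From this structural analysis I would derive the exact formula for the largest component. Writing $m(S)=|E(G[S])|$ for the number of edges of $G$ with both endpoints in $S$, the isolated vertices of $G^*-X$ are exactly the $m(S)$ vertices $v_e$ with $e\in E(G[S])$, and the unique large component has $(|V|-|S|)+(|E|-m(S))$ vertices. Since this large component always has at least $|V|-|S|\ge 1$ vertices, it dominates the singleton components, giving
\[
 n(G^*-X) = |V|+|E|-|S|-m(S).
\]
This identity is the engine of the whole proof, and I expect establishing it carefully (in particular justifying that there is a single large component, which is exactly where the hypothesis $k<|V|$ is used) to be the main obstacle; the remainder is algebra.

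With the formula in hand, I would substitute it into the two inequalities. For statement (ii), $|X|+n(G^*-X)\le |V|+|E|-\binom{k}{2}$ simplifies to $m(S)\ge\binom{k}{2}$, while for statement (iii), $n(G^*-X)\le |V|+|E|-\binom{k}{2}-k$ simplifies to $|S|+m(S)\ge\binom{k}{2}+k$. In both cases the constraint $|S|=|X|\le k$ caps the relevant quantities: any $S$ with $|S|\le k$ satisfies $m(S)\le\binom{|S|}{2}\le\binom{k}{2}$ and $|S|+m(S)\le k+\binom{k}{2}$. Thus each simplified inequality can hold only with equality throughout, forcing $|S|=k$ and $m(S)=\binom{k}{2}$, which says precisely that $G[S]$ is a clique on $k$ vertices.

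The final step is to assemble the equivalences. The tightness argument shows (ii)$\Rightarrow$(i) and (iii)$\Rightarrow$(i), since in each case the witnessing set $S$ induces a $k$-clique in $G$. Conversely, given a clique $S$ of size $k$ in $G$, taking $X=\{v_x\mid x\in S\}$ yields $|X|=k$ and $m(S)=\binom{k}{2}$, and plugging into the formula verifies both numerical bounds in (ii) and (iii) directly. This closes the cycle (i)$\Rightarrow$(ii) and (i)$\Rightarrow$(iii), so all three statements are equivalent.
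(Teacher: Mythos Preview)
Your proposal is correct and follows essentially the same approach as the paper: both arguments compute $n(G^*-X)=|V|+|E|-|S|-m(S)$ by observing that removing $X$ leaves one large component (using $k<|V|$) together with $m(S)$ isolated edge-vertices, and then force $|S|=k$ and $m(S)=\binom{k}{2}$ via the trivial bound $m(S)\le\binom{|S|}{2}$. The only cosmetic difference is that the paper argues the cycle (i)$\Rightarrow$(iii)$\Rightarrow$(ii)$\Rightarrow$(i), whereas you prove (i)$\Leftrightarrow$(ii) and (i)$\Leftrightarrow$(iii) in parallel from the common formula.
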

\begin{proof}
  Let $n = |V|$ and $m = |E|$.  We first prove that (i) implies (iii).
  Suppose $G$ has a clique $S$ of size~$k$.  Let $X = \{v_x\in C^*
  \mid x\in S\}$ denote the set of vertices in $G^*$ corresponding to
  the vertices of $S$.  Similarly, let $Y = \{v_e\in I^* \mid e\in
  E(G[S])\}$ denote the set of vertices in $G^*$ corresponding to the
  edges in $G$ both endpoints of which belong to $S$.  Observe that
  $|Y| = \binom{k}{2}$ due to the fact that $S$ is a clique of size
  $k$ in $G$.  Now consider the graph $G^* - X$.  In this graph, every
  vertex of $Y$ is an isolated vertex, while every vertex of
  $I^*\setminus Y$ has at least one neighbor in the clique
  $C^*\setminus X$.  This implies that $n(G^* - X) = n + m -
  \binom{k}{2}-k$.

  Since (iii) trivially implies (ii), it remains to show that (ii)
  implies (i).  Suppose there exists a set $X\subseteq C^*$ such that
  $|X| \leq k$ and $|X|+n(G^*-X)\leq |V|+|E|-\binom{k}{2}$.  Let
  $Z\subseteq I^*$ be the set of vertices in $I^*$ both neighbors of
  which belong to $X$.  Observe that $|Z|\leq \binom{|X|}{2}$ and
  $n(G^*-X) = n+m-|X|-|Z|$.  Hence
  \[
  n+m-\binom{k}{2} \geq |X|+n(G^*-X) = n+m-|Z| \geq n+m-\binom{|X|}{2} \, ,
  \]
  which implies that $\binom{k}{2}\leq \binom{|X|}{2}$.  Since $|X|\leq
  k$ by assumption, we find that $|X| = k$ and all the above
  inequalities must be equalities.  In particular, we find that $|Z| =
  \binom{|X|}{2}$.  We conclude that the vertices in~$G$ that
  correspond to~$X$ form a clique of size~$k$ in~$G$.
\end{proof}

\begin{theorem}
\label{t-splithard}
\pname{Weighted Vertex Integrity} is \NP-complete on split graphs.
\end{theorem}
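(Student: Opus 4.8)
The plan is to first observe that \wvi{} is in \NP{} (a set $X$, together with a heaviest component of $G-X$, is a polynomial-size certificate), and then to prove \NP-hardness by a polynomial-time reduction from \pname{Clique}, which is \NP-complete. I would reuse the incidence split graph $G^*=(C^*,I^*,E^*)$ and the combinatorial equivalence of Lemma~\ref{l-incidence}. Given an instance $(G,k)$ of \pname{Clique} with $G=(V,E)$, $|V|=n$, $|E|=m$, I may assume $1\le k<n$ and $\binom{k}{2}\le m$ (otherwise the instance is trivial). The target is condition (ii) of Lemma~\ref{l-incidence}: $G$ has a clique of size $k$ if and only if there is a set $X\subseteq C^*$ with $|X|\le k$ and $|X|+n(G^*-X)\le n+m-\binom{k}{2}$.

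The main obstacle is that one cannot read condition (ii) off a plain vertex-integrity value of $G^*$. Deleting a clique vertex $v_x\in C^*$ costs $w(v_x)$, but it also removes $v_x$ from the (unique, clique-containing) heaviest component of $G^*-X$, so its weight is \emph{conserved}: the quantity $w(X)+\wcc(G^*-X)$ is insensitive to $|X|$. Hence the budget $|X|\le k$ appearing in condition (ii) simply cannot be enforced by an unconstrained weighted vertex-integrity objective, and $\iota(G^*)$ carries no information about $k$ at all. Overcoming this conservation phenomenon is the crux of the reduction.

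To break the conservation I would attach a single \emph{guard} vertex $q$: let $H=G^*+q$, where $q$ is isolated, assign weights $w(v)=1$ for every $v\in C^*\cup I^*$ and $w(q)=\Theta:=n+m-\binom{k}{2}-k$, and set $p:=n+m-\binom{k}{2}$. Note that $H$ is a split graph (clique $C^*$, independent set $I^*\cup\{q\}$), that $\Theta\ge 0$, and that all weights are polynomially bounded, so a correct reduction witnesses \emph{strong} \NP-hardness. The point of $q$ is to place a fixed floor $\Theta$ under $\wcc$: once the clique-containing component drops below weight $\Theta$, every further deletion becomes pure cost, and this is exactly what makes $|X|$ visible in the objective.

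For the verification, by Lemma~\ref{l-simplicial} there is an $\iota$-set $X$ of $H$ containing no simplicial vertex; since every vertex of $I^*$ is simplicial (its two neighbours lie in the clique $C^*$) and $q$ is isolated, this forces $X\subseteq C^*$. For such $X$, the components of $H-X$ are $q$ (of weight $\Theta$) together with the unit-weight components of $G^*-X$, whose heaviest has weight equal to its vertex count $n(G^*-X)$; thus $w(X)+\wcc(H-X)=|X|+\max\{\Theta,\,n(G^*-X)\}$. Because $a+\max\{b,c\}\le p$ holds exactly when $a+b\le p$ and $a+c\le p$, the inequality $w(X)+\wcc(H-X)\le p$ is equivalent to the conjunction $|X|\le p-\Theta=k$ and $|X|+n(G^*-X)\le p=n+m-\binom{k}{2}$, which is precisely condition (ii) of Lemma~\ref{l-incidence}. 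Therefore $\iota(H)\le p$ if and only if $G$ has a clique of size $k$, completing the reduction. The one genuinely delicate point is the calibration of $\Theta$ and $p$: they must be chosen so that the two lower bounds $\wcc\ge\Theta$ and $\wcc\ge n(G^*-X)$ reproduce, respectively, the budget and the threshold of condition (ii); any other choice would decouple the single weighted vertex-integrity objective from Lemma~\ref{l-incidence}.
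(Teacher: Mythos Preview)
Your proposal is correct and follows essentially the same approach as the paper: the construction (incidence split graph $G^*$ plus one isolated heavy vertex of weight $n+m-\binom{k}{2}-k$, with $p=n+m-\binom{k}{2}$), the use of Lemma~\ref{l-simplicial} to force $X\subseteq C^*$, and the appeal to Lemma~\ref{l-incidence} all match the paper's proof exactly. Your $\max$-based formulation of the equivalence is a slightly cleaner way of presenting the two inequalities that the paper derives separately, but the argument is the same.
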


\begin{proof}
  We give a reduction from the \NP-hard problem \pname{Clique}.  Given
  an instance $(G,k)$ of \pname{Clique} with $n = |V(G)|$ and $m =
  |E(G)|$, we create an instance $(G',w,p)$ of \pname{Weighted Vertex
    Integrity} as follows.  To construct $G'$, we start with the
  incidence split graph $G^* = (C^*,I^*,E^*)$ of $G$, and we add a
  single isolated vertex $z$.  We define the weight function $w$ by
  setting $w(z) = n+m-\binom{k}{2}-k$ and $w(v) = 1$ for every $v\in
  V(G')\setminus \{z\}$.  Finally, we set $p = n+m-\binom{k}{2}$.  For
  convenience, we assume that $k<n$ and that $\binom{k}{2} \leq m$.
  We now claim that $G$ has a clique of size $k$ if and only if
  $\iota(G')\leq p$.  Since $G'$ is a split graph and all the vertex
  weights are polynomial in~$n$, this suffices to prove the theorem.

  
  First suppose~$G$ has a clique~$S$ of size~$k$.  By
  Lemma~\ref{l-incidence}, there exists a set~$X \subseteq C^*$ such
  that $|X|\leq k$ and $n(G^*-X)\leq n+m-\binom{k}{2}-k$.  Since $w(z)
  = n+m-\binom{k}{2}-k$ and every other vertex in~$G'$ has weight~$1$,
  it follows that $\wcc(G'-X) = n+m-\binom{k}{2}-k$.  Consequently,
  $w(X)+\wcc(G'-X)\leq n+m-\binom{k}{2} = p$, so we conclude that
  $\iota(G')\leq p$.


  For the reverse direction, suppose $\iota(G')\leq p$, and let $X\subseteq V(G')$ be
  an~$\iota$-set of~$G'$.  Due to Lemma~\ref{l-simplicial}, we may
  assume that $X \subseteq C^*$.  We claim that $|X|\leq k$.  For
  contradiction, suppose $|X|\geq k+1$.  Then $w(X) = |X|\geq k+1$ and
  $\wcc(G'-X)\geq w(z) = p-k$.  This implies that $\iota(G') =
  w(X)+\wcc(G'-X)\geq p+1$, yielding the desired contradiction.  Now
  let~$H$ be the component of~$G'-X$ containing the clique $C^*
  \setminus X$.  The fact that every vertex in $V(G')\setminus \{z\}$
  has weight~$1$ and the assumption that $k<n$ imply that $|V(H)| =
  n(G^*-X) = \wcc(G^*-X)$.  Now observe that $|X|+n(G^*-X)\leq
  |X|+\max\{w(Z),\wcc(G^*-X)\} = \iota(G')\leq p = n+m-\binom{k}{2}$.
  We can therefore invoke Lemma~\ref{l-incidence} to conclude that~$G$
  has a clique of size~$k$.
\end{proof}

The following result, previously obtained by Li et al.~\cite{LiZZ08},
is an easy consequence of Lemma~\ref{l-simplicial}.
Theorem~\ref{t-chordal} below shows that this result is in some sense
best possible.

\begin{theorem}[\cite{LiZZ08}]
  \label{t-spliteasy}
  \pname{Vertex Integrity} can be solved in linear time on split
  graphs.
\end{theorem}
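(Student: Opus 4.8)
The plan is to use Lemma~\ref{l-simplicial} to restrict attention to deletion sets lying inside the clique, and then to exploit the rigid structure of a split graph to read off the optimal such set. Fix a split partition $(C,I)$ of the input graph $G$, with $n=|V(G)|$; such a partition can be produced in linear time by a standard split-graph recognition procedure. Every vertex of $I$ has its neighbourhood contained in the clique $C$, so its neighbourhood induces a clique and the vertex is simplicial. Hence Lemma~\ref{l-simplicial} guarantees an $\iota$-set $X$ with $X\subseteq C$, and it suffices to minimise $|X|+n(G-X)$ over all $X\subseteq C$.

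Next I would analyse the components of $G-X$ for $X\subseteq C$. When $C\setminus X\neq\emptyset$, the surviving clique vertices lie in a single component together with every vertex of $I$ that still has a neighbour in $C\setminus X$, while every other vertex of $I$ becomes an isolated singleton. Writing $\mathrm{iso}(X)$ for the number of vertices of $I$ all of whose neighbours lie in $X$, this large component is the heaviest, and a direct count gives $|X|+n(G-X)=n-\mathrm{iso}(X)$. Thus over proper subsets $X\subsetneq C$ the objective depends only on how many vertices of $I$ we isolate. Since $\mathrm{iso}$ is monotone under inclusion, its maximum over proper subsets is attained at some $C\setminus\{c\}$, and $\mathrm{iso}(C\setminus\{c\})=|I|-|N(c)\cap I|$ counts the vertices of $I$ non-adjacent to $c$. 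This identifies the best proper subset as one keeping a clique vertex $c^\ast$ with the fewest neighbours in $I$, yielding value $|C|+\delta$, where $\delta=\min_{c\in C}|N(c)\cap I|$.

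Finally I would dispose of the remaining case $X=C$ on its own: then $G-X=I$ consists of isolated vertices, giving value $|C|+1$ when $I\neq\emptyset$ and $|C|$ otherwise. Combining the two cases produces a closed form for $\iota(G)$ in terms of $|C|$, $\delta$, and whether $I$ is empty. All the ingredients---the split partition, the quantities $|N(c)\cap I|$ for $c\in C$ (obtainable by a single scan over the edges between $C$ and $I$), and their minimum $\delta$---are computable in $O(n+m)$ time, which yields the claimed linear-time bound. The main thing to get right is the structural bookkeeping of the second paragraph: verifying that the large component is indeed the heaviest whenever $C\setminus X\neq\emptyset$, that the telescoping identity $|X|+n(G-X)=n-\mathrm{iso}(X)$ holds, and that monotonicity of $\mathrm{iso}$ collapses the search over all proper subsets of $C$ to a single minimum-degree computation; once these are established, the degenerate cases where $C$ or $I$ is empty are routine.
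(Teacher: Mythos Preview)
Your proposal is correct and follows exactly the approach the paper indicates: the paper does not spell out a proof but simply remarks that the result is ``an easy consequence of Lemma~\ref{l-simplicial},'' and your argument is a clean execution of precisely that idea---restrict to $X\subseteq C$ via the lemma, observe that the objective collapses to $n-\mathrm{iso}(X)$, and reduce the optimisation to a minimum-degree computation over $C$. The structural bookkeeping you flag (heaviest component, the telescoping identity, monotonicity of $\mathrm{iso}$) is all sound.
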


\begin{theorem}
  \label{t-chordal}
  \pname{Vertex Integrity} is \NP-complete on chordal graphs.
\end{theorem}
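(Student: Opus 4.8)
The plan is to reuse the reduction from \pname{Clique} that underlies Theorem~\ref{t-splithard}, but to produce an \emph{unweighted} chordal graph rather than a weighted split graph. The only purpose of the weights in that reduction was the single heavy isolated vertex $z$ of weight $n+m-\binom{k}{2}-k$, whose role was to guarantee that every solution leaves behind a component of at least that weight. In the unweighted setting I would simulate $z$ combinatorially by a \emph{disjoint clique} $K_W$ on $W=n+m-\binom{k}{2}-k$ unit-weight vertices: a clique of $W$ vertices is a single component of weight exactly $W$, just like $z$, but it lives in an unweighted graph. The disjoint union of a split graph and a complete graph is no longer split, yet it is still chordal, which is exactly the jump in graph class we are after.

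Concretely, given an instance $(G,k)$ of \pname{Clique} with $n=|V(G)|$, $m=|E(G)|$, $k<n$ and $\binom{k}{2}\le m$, I would let $G'$ be the disjoint union of the incidence split graph $G^*=(C^*,I^*,E^*)$ of $G$ and a fresh clique $K_W$ with $W=n+m-\binom{k}{2}-k$, assign weight $1$ to every vertex, and set $p=n+m-\binom{k}{2}$. First I would record two easy facts: the problem is in \NP, and $G'$ is chordal, since it is a disjoint union of two chordal graphs (a split graph and a complete graph) and chordality is preserved under disjoint union. Note that $W\ge n-k\ge 1$, so $K_W$ is well defined.

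For the forward direction, from a clique $S$ of size $k$ in $G$ I would delete $X=\{v_x \mid x\in S\}\subseteq C^*$. As in the proof of Lemma~\ref{l-incidence}, the $\binom{k}{2}$ edge-vertices with both endpoints in $S$ become isolated, and the remaining component of $G^*-X$ containing $C^*\setminus X$ has exactly $W$ vertices; since $K_W$ is untouched, $n(G'-X)=W$ and therefore $|X|+n(G'-X)=k+W=p$. For the reverse direction, the structural key is that every vertex of $I^*$ and every vertex of $K_W$ is \emph{simplicial} in $G'$ (an edge-vertex sees only the two endpoints of its edge, which are adjacent in $C^*$; a vertex of $K_W$ sees only the rest of $K_W$). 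Hence Lemma~\ref{l-simplicial} lets me assume that the $\iota$-set $X$ satisfies $X\subseteq C^*$. With $K_W$ left intact we get $n(G'-X)\ge W$, forcing $|X|\le p-W=k$; and since $n(G^*-X)\le n(G'-X)$ we obtain $|X|+n(G^*-X)\le p=n+m-\binom{k}{2}$, which is precisely condition~(ii) of Lemma~\ref{l-incidence}, so $G$ has a clique of size $k$.

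The step that needs the most care is ensuring that replacing the heavy vertex $z$ by the clique $K_W$ does not open up a cheaper solution that deletes vertices inside $K_W$; a priori one might worry that shaving a few vertices off the clique could reduce the largest component profitably. This is exactly what Lemma~\ref{l-simplicial} rules out: because every vertex of $K_W$ is simplicial, some optimal $\iota$-set avoids $K_W$ entirely, so the clique behaves as a rigid component of weight $W$ and the analysis collapses to the split-graph argument of Theorem~\ref{t-splithard}. I therefore expect the whole proof to be short once the simplicial observation is in place.
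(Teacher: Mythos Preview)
Your proposal is correct and is exactly the approach taken in the paper: replace the heavy isolated vertex~$z$ from the proof of Theorem~\ref{t-splithard} by a disjoint clique of size $W=n+m-\binom{k}{2}-k$, observe that the resulting graph is chordal, and use Lemmas~\ref{l-simplicial} and~\ref{l-incidence} to conclude. In fact you spell out more detail than the paper does, including the simpliciality of the vertices of $K_W$ and the inequality $n(G^*-X)\le n(G'-X)$, both of which the paper leaves implicit.
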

\begin{proof}
  We describe a slight modification of the reduction in the proof of
  Theorem~\ref{t-splithard}.  Given an instance $(G,k)$ of
  \pname{Clique}, we construct a graph $G''$ in the same way as we
  constructed the graph $G'$, but instead of adding an isolated vertex
  $z$ with weight $n+m-\binom{k}{2}-k$ in the last step, we add a
  clique of size $n+m-\binom{k}{2}-k$.  We set $p = n+m-\binom{k}{2}$
  as before.  Using Lemma~\ref{l-incidence} and arguments similar to
  the ones in the proof of Theorem~\ref{t-splithard}, it is not hard
  to show that the obtained instance $(G'',p)$ of \pname{Vertex
    Integrity} is a yes-instance if and only if $(G,k)$ is a
  yes-instance of \pname{Clique}.  The observation that $G''$ is a
  chordal graph completes the proof.
\end{proof}

Recall that Fellows and Stueckle~\cite{fellows1989immersion} proved
that \vi{} can be solved in time $O(p^{3p} n)$.  Their arguments can
be slightly strengthened to yield the following result.

\begin{theorem}
\label{t-vi-branching}
\wvi{} can be solved in $O(p^{p+1} n)$ time.
\end{theorem}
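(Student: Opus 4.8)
The plan is to prove Theorem~\ref{t-vi-branching} via a bounded-depth branching algorithm, adapting the Fellows--Stueckle approach but exploiting vertex weights to obtain a better running time. The key structural observation is that if $\iota(G) \leq p$, then there is an $\iota$-set $X$ together with a witnessing decomposition in which $w(X) \leq p$ and every component of $G-X$ has weight at most~$p$. In particular, any vertex $v$ placed in $X$ contributes at least weight~$1$ (since weights are in $\mathbb{N}$, but vertices of weight~$0$ can be handled separately), so $|X| \leq p$. This bounds the \emph{depth} of the branching by~$p$, which is the heart of the improvement over the $p^{3p}$ bound.

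First I would set up the recursive procedure. The algorithm maintains a partial solution: a set $X$ of vertices already chosen for deletion, with the invariant that $w(X) \leq p$. At each step, it checks whether $\wcc(G-X) \leq p - w(X)$; if so, we have found a valid $\iota$-set and accept. Otherwise, there is a component $H$ of $G-X$ with $w(V(H)) > p - w(X) \geq 0$, which is too heavy and must be broken up further. To break $H$, I would pick a connected subgraph of $H$ whose total weight exceeds the allowed threshold but which is ``small'' in an appropriate sense, and branch on which of its vertices enters~$X$. Concretely, starting from any vertex of $H$ and growing a connected set greedily, one can find a connected vertex set $T \subseteq V(H)$ with $w(T) > p - w(X)$ but $w(T) \leq p - w(X) + (\text{max vertex weight in } T) \leq p$; since $T$ is connected and too heavy, at least one vertex of $T$ must be deleted, so we branch into $|T| \leq p$ subproblems, each adding one vertex of $T$ to~$X$.

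The running time analysis proceeds as follows. Each branching node has at most $p$ children (since $|T| \leq w(T) \leq p$ when all deleted vertices have positive weight), and the depth is at most $p$ because each branch increases $w(X)$ by at least~$1$ while maintaining $w(X) \leq p$. Hence the search tree has at most $p^p$ leaves. At each node we spend $O(n)$ time (or $O(n+m)$, absorbed into the stated bound) to identify a heavy component and construct the connected set~$T$ via a graph traversal. This yields the claimed $O(p^{p+1} n)$ bound, where the extra factor of $p$ over $p^p$ accounts for the per-node work scaling.

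The main obstacle I anticipate is the careful handling of \textbf{zero-weight vertices}, which threaten both the depth bound (a zero-weight vertex added to $X$ does not increase $w(X)$) and the branching factor (a connected heavy set could consist of many zero-weight vertices plus one heavy vertex, so $|T|$ need not be bounded by~$w(T)$). To deal with this, I would argue that zero-weight vertices never need to be placed in~$X$: by an exchange argument analogous to Lemma~\ref{l-simplicial}, removing a zero-weight vertex from any $\iota$-set and returning it to its (unique relevant) adjacent component does not increase $w(X) + \wcc(G-X)$, so we may assume $X$ contains only vertices of positive weight. Alternatively, one preprocesses zero-weight components. Once zero-weight vertices are excluded from consideration as deletion candidates, both the depth bound of~$p$ and the branching factor of~$p$ hold cleanly, and the analysis goes through. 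The precise bookkeeping needed to ensure the connected ``too-heavy'' set $T$ can always be chosen with $|T| \leq p$ is the delicate part and where I would spend the most care.
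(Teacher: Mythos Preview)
Your overall strategy---branching on a small connected subgraph that is too heavy, with depth bounded by~$p$---is exactly the paper's. However, two steps need repair.

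First, your exchange argument for zero-weight vertices is incorrect. A zero-weight vertex need not be simplicial, so removing it from~$X$ may merge several components of $G-X$ and raise $\wcc(G-X)$ arbitrarily. For example, on the three-vertex path with endpoints $a,c$ and middle vertex~$b$, where $w(a)=w(c)=p$ and $w(b)=0$, the only way to achieve $\iota(G)=p$ is to take $X=\{b\}$; pulling~$b$ out of~$X$ gives a single component of weight~$2p$. Your alternative of ``preprocessing zero-weight components'' likewise does not handle individual zero-weight vertices sitting inside positive-weight components. The paper's remedy is simpler and sound: since zero-weight vertices can be placed in~$X$ at no cost, one deletes all of them from~$G$ at the outset, obtaining an equivalent instance in which every vertex has weight at least~$1$. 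This also cleanly gives $|T|\le w(T)$ and the bound $|T|\le p+1$ on the branching set.

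Second, your per-node cost is $\Theta(n+m)$, not $O(n)$: locating a heavy connected set (or even checking whether one exists) requires a graph traversal, and $m$ may be $\Theta(n^2)$. The paper closes this gap by observing that any yes-instance has pathwidth at most~$p-1$ (the bags $X\cup V(G_i)$ form a path decomposition) and hence at most $(p-1)n$ edges; the algorithm therefore rejects up front whenever $m>(p-1)n$, after which each DFS genuinely costs $O(pn)$ and the total is $O(p^{p+1}n)$. Without this edge bound your analysis only yields $O\bigl(p^p(n+m)\bigr)$, which is not the stated running time.
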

\begin{proof}
  Let $(G,w,p)$ be an instance of \wvi, and let $n = |V(G)|$ and $m =
  |E(G)|$.  We assume that every vertex in $G$ has weight at
  least~$1$, as vertices of weight~$0$ can simply be deleted from the
  graph.  This implies in particular that $|X|\leq w(X)$ for every set
  $X\subseteq V(G)$.  We now show that we may also assume that $m\leq
  (p-1)n$.  Suppose that $(G,w,p)$ is a yes-instance.  Then there
  exists a set $X\subseteq V(G)$ such that $w(X)+\wcc(G-X) \leq p$.
  Let $G_1,\ldots,G_r$ be the components of $G-X$.  Since every vertex
  has weight at least~$1$, it holds that $|X \cup V(G_i)| \leq w(X
  \cup V(G_i)) \leq p$ for each $i \in \{1, \ldots, r\}$.  Observe
  that $G$ has a path decomposition of width at most~$p-1$ whose bags
  are exactly the sets $X \cup V(G_i)$.  This implies that the
  pathwidth, and hence the treewidth, of $G$ is at most~$p-1$.  It is
  well-known that every $n$-vertex graph of treewidth at most~$t$ has
  at most~$tn$ edges~\cite{BodlaenderF05}.  We thus conclude that if
  $(G,w,p)$ is a yes-instance, then $m\leq (p-1)n$.  Our algorithm can
  therefore safely reject the instance if $m>(p-1)n$.

  We now describe a simple branching algorithm that solves the
  problem.  At each step of the algorithm, we use a depth-first search
  to find a set $U$ of at most~$p+1$ vertices such that $G[U]$ is
  connected and $w(U) \geq p+1$.  If such a set does not exist, then
  every component of the graph under consideration has weight at
  most~$p$, so the empty set is an $\iota$-set of the graph and we are
  done.  Otherwise, we know that any $\iota$-set of the graph contains
  a vertex of~$U$.  We therefore branch into $|U| \leq p+1$
  subproblems: for every $v \in U$, we create the instance $(G-v, w,
  p-w(v))$, where we discard the instance in case $p - w(v) < 0$.
  Since the parameter $p$ decreases by at least~$1$ at each branching
  step, the corresponding search tree~$T$ has depth at most~$p$.
  Since~$T$ is a $p+1$-ary tree, it contains $O(p^p)$ nodes in total.
  Due to the assumption that $m\leq (p-1)n$, the depth-first search at
  each step can be performed in time $O(pn)$.  This yields an overall
  running time of $O(p^p pn) = O(p^{p+1} n)$.
\end{proof}

We prove that the problem admits a polynomial kernel with respect to
parameter~$p$.

\begin{theorem}
  \label{t-vikernel}
  \wvi{} admits a kernel with at most~$p^3$ vertices, where each
  vertex has weight at most~$p$.
\end{theorem}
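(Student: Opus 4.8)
The plan is to produce a kernel for \wvi{} parameterized by~$p$ by bounding both the number of vertices and their weights. First I would observe that, as in the proof of Theorem~\ref{t-vi-branching}, any vertex of weight~$0$ can be deleted, and in fact any vertex of weight strictly greater than~$p$ must belong to every $\iota$-set of weight at most~$p$ if the instance is to be a yes-instance; so I would introduce a reduction rule that caps weights at~$p$ (replacing any weight exceeding~$p$ by~$p$, or equivalently observing that such a vertex can never lie in a surviving component). This immediately gives the ``each vertex has weight at most~$p$'' part of the conclusion. Combined with the assumption that every vertex has weight at least~$1$, this means $|X|\le w(X)\le p$ for any candidate solution~$X$, and each surviving component has at most~$p$ vertices.

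Next I would bound the number of vertices. The natural handle is that a yes-instance has treewidth (indeed pathwidth) at most~$p-1$, exactly as established in Theorem~\ref{t-vi-branching}, so the graph is sparse: $m\le (p-1)n$. The heart of a kernelization argument like this is to bound~$n$ itself. Here I expect the key structural fact to be a bound on the number of distinct ``small connected pieces'' the graph can contain before the instance is forced to be a no-instance, or alternatively a counting argument: if $(G,w,p)$ is a yes-instance witnessed by~$X$ with $|X|\le w(X)\le p$, then $G-X$ decomposes into components each of weight at most~$p$, hence each of size at most~$p$. The plan is to argue that an instance with too many vertices must have too many components (since $X$ is small), and then to reduce away redundant components. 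Concretely, I would show that if two components of $G-X$ are ``equivalent'' in how they attach to the at most~$p$ possible deletion vertices, we need to keep only boundedly many copies; a marking/bounding argument then caps the number of components, and since each has at most~$p$ vertices, this caps~$n$.

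The cleanest way to get the stated $p^3$ bound is probably to phase the argument through the deletion set: there are at most~$p$ vertices in any solution~$X$; each of the at most~$p$ surviving components has at most~$p$ vertices; and one must also account for vertices that could serve in~$X$. I would set up a rule that, whenever the number of vertices exceeds~$p^3$ (or a more careful product bound of the form $p\cdot p\cdot p$ coming from $|X|\le p$ deletion vertices, each surviving component of size $\le p$, and at most $p$ components needed), the instance can be safely declared a no-instance or can be shrunk. The main obstacle, and the step deserving the most care, is ruling out that a genuine yes-instance could require more than $p^3$ vertices: I would need to argue that any yes-instance can be witnessed using at most~$p$ surviving components (since keeping a component costs nothing against~$p$ only through its weight, and empty-weight vertices are gone), so that a yes-instance has at most $|X| + (\text{number of components})\cdot(\text{max component size}) \le p + p\cdot p \le p^3$ relevant vertices for $p\ge 2$, while small cases are handled directly.

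Once the bound $n\le p^3$ is in place, the kernel is immediate: either the reduced graph already has at most~$p^3$ vertices (each of weight at most~$p$ by the first rule), or we reject. Thus the plan is (1) cap weights at~$p$, (2) delete zero-weight vertices, (3) use the treewidth/pathwidth bound to obtain sparsity and to structure solutions, and (4) bound the vertex count via the product $p\cdot p$ of component count and component size plus the deletion set, yielding at most~$p^3$ vertices. I expect step~(4)—precisely justifying that no more than~$p$ components and no more than~$p$ vertices per component are ever needed, and that surplus vertices can be discarded without changing the answer—to be where the real work lies, and I would phrase it as an explicit reduction rule together with a short safeness proof rather than as an abstract counting bound.
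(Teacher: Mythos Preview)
Your proposal has a real gap at step~(4). You assert that a yes-instance can be ``witnessed using at most $p$ surviving components,'' but the number of components of $G-X$ is determined by the graph, not by your choice of witness, and it can be arbitrarily large: a star with centre $v$ of weight~$1$ and $N$ leaves of weight~$1$ has $\iota(G)=2$, yet $G-\{v\}$ has $N$ components. Your sparsity observation from the pathwidth bound controls the number of edges, not of vertices, and the vague ``mark equivalent components'' idea does not apply here because you do not know $X$ in advance and the components you need to bound are components of $G-X$, not of $G$.

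The paper closes this gap with a reduction rule you are missing entirely: if $w(N_G[v])>p$ for some vertex~$v$, then $v$ must lie in every solution (otherwise either $X$ or the component of $G-X$ containing $v$ already has weight exceeding~$p$), so delete $v$ and decrease $p$ by~$w(v)$. Once this rule is exhausted, every remaining vertex has degree at most~$p$; that is the key structural fact, since now each of the at most~$p$ vertices of $X$ is adjacent to at most~$p$ components of $G-X$. Combined with a second rule that keeps only the $p+1$ heaviest components of $G$ itself, this bounds the number of components of $G-X$ by $p^2+p$, and the $p^3$ vertex bound follows. As a side remark, your weight-capping rule is not safe as stated: a single-vertex graph with weight $p+1$ is a no-instance, but capping its weight to~$p$ turns it into a yes-instance; the correct action when $w(v)>p$ is to reject outright, and in the paper this case is subsumed by the neighbourhood rule, which also delivers the ``each vertex has weight at most~$p$'' conclusion as a byproduct.
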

\begin{proof}
  We describe a kernelization algorithm for the problem.  Let
  $(G,w,p)$ be an instance of \wvi{}.  We first delete all vertices of
  weight~$0$ without changing the parameter.  Observe that after this
  preprocessing step, the weight of every vertex is at least~$1$, and
  hence $|X|\leq w(X)$ for every set $X\subseteq V(G)$.  We apply the
  following two reduction rules.

  Our first reduction rule starts by sorting the components of $G$
  according to their weights.  Let $G_1,\ldots,G_r$ be the components
  of $G$ such that $\wcc(G_1) \geq \wcc(G_2) \geq \cdots \geq
  \wcc(G_r)$.  If $r > p+1$, then we delete the component $G_i$ for
  every $i \in \{p+2,\ldots,r\}$, without changing the parameter.  In
  other words, we keep only the $p+1$ heaviest components of $G$.  Let
  $G'$ be the obtained graph.  To see why this rule is safe, it
  suffices to prove that $(G,w,p)$ is a yes-instance if the new
  instance $(G',w,p)$ is a yes-instance, as the reverse direction
  trivially holds.  Suppose $(G',w,p)$ is a yes-instance.  Then there
  exists a set $X\subseteq V(G')$ such that $w(X)+\wcc(G'-X)\leq p$.
  Since $|X|\leq w(X)\leq p$ and $G'$ has exactly $p+1$ components,
  there exists an index $i\in \{1,\ldots,p+1\}$ such that $X$ does not
  contain any vertex from $G_i$.  Since $\wcc(G_i) \geq \wcc(G_j)$ for
  every $j \in \{p+2,\ldots,r\}$, it holds that $\wcc(G - X) = \wcc(G'
  - X)$.  Hence $w(X) + \wcc(G - X) = w(X)+\wcc(G'-X)\leq p$, implying
  that $\iota(G)\leq p$ and that $(G,w,p)$ is a yes-instance.

  The second reduction rule checks whether there exists a vertex~$v
  \in V(G)$ for which $w(N_G[v]) > p$.  Suppose such a vertex~$v$
  exists.  If $p-w(v)\geq 0$, then we delete~$v$ from the graph and
  reduce the parameter~$p$ by~$w(v)$.  If $p-w(v)<0$, then we return a
  trivial no-instance.  To see why this is safe, it suffices to show
  that if $(G,w,p)$ is a yes-instance, then~$v$ belongs to any
  $\iota$-set of~$G$.  Suppose $(G,w,p)$ indeed is a yes-instance, and
  let~$X$ be an $\iota$-set of~$G$.  Then $w(X)+\wcc(G-X) =
  \iota(G)\leq p$.  For contradiction, suppose that $v\notin X$.
  Consider the component~$H$ of~$G-X$ that contains~$v$.  Since every
  vertex of~$N_G[v]$ belongs either to~$X$ or to component~$H$, and
  $w(N_G[v]) > p$ by assumption, we find that $w(X) + \wcc(H) > p$.
  But this implies that $w(X) + \wcc(G-X) \geq w(X) + \wcc(H) > p$,
  yielding the desired contradiction.

  Let $(G',w,p')$ denote the instance obtained after exhaustively
  applying the above reduction rules, where~$w$ denotes the
  restriction of the original weight function to the vertices of~$G'$.
  Observe that~$p'\leq p$.  We assume that $p'\geq 2$, as otherwise we
  can trivially solve the instance $(G',w,p')$.  We claim that if
  $(G',w,p')$ is a yes-instance, then $|V(G')|\leq p^3$.  Suppose
  $(G',w,p')$ is a yes-instance, and let $X\subseteq V(G')$ be an
  $\iota$-set of $G'$.  Then $w(X) + \wcc(H) \leq p' \leq p$ for every
  component $H$ of $G'-X$.  This, together with the fact that every
  vertex in $G'$ has weight at least~$1$, implies that $|X|\leq p$ and
  $|H| \leq \wcc(H) \leq p-w(X) \leq p-|X|$ for every component $H$ of $G'-X$.
  Since the first reduction rule cannot be applied on the instance
  $(G',w,p')$, we know that $G'$ has at most $p'+1\leq p+1$
  components.  If $X = \emptyset$, then each of these components
  contains at most $p$ vertices, so $|V(G')| \leq (p+1) p \leq p^3$,
  where the last inequality follows from the assumption that $p\geq
  p'\geq 2$.  Now suppose $|X|\geq 1$.  Observe that every vertex
  in~$X$ has degree at most~$p'\leq p$ due to the assumption that the
  second reduction rule cannot be applied.  Hence, every vertex of $X$
  is adjacent to at most $p$ components of $G'-X$, implying that there
  are at most $p^2$ components of $G'-X$ that are adjacent to $X$.
  Since $G'$ itself has at most $p+1$ components, at least one of
  which contains a vertex of $X$, we find that $G'-X$ has at most
  $p^2+p$ components in total.  Recall that each of these components
  contains at most~$p-|X|$ vertices.  We conclude that $|V(G')|\leq
  (p^2+p)(p-|X|) + |X| \leq p^3$, where we use the assumption that
  $|X|\geq 1$.  Due to the second reduction rule, each vertex in $G'$
  has weight at most~$p$.

  It remains to argue that our kernelization algorithm runs in
  polynomial time.  Observe that the execution of any reduction rule
  strictly decreases either the number of vertices in the graph or the
  parameter, so each rule is applied only a polynomial number of
  times.  The observation that each rule can be executed in polynomial
  time completes the proof.
\end{proof}

\section{Component Order Connectivity}
\label{s-coc}

It is easy to see that $(G,p)$ is a yes-instance of \vi{} if and only
if there exist non-negative integers $k$ and $\ell$ with $k+\ell = p$
such that $(G,k,\ell)$ is a yes-instance of \coc.  Hence, any instance
$(G,p)$ of \vi{} can be solved by making at most~$p$ calls to an
algorithm solving \coc{},
implying that \coc{} cannot be solved in polynomial time on any graph
class for which \vi{} is \NP-complete, unless~\P=\NP.

Our next two results identify graph classes for which wCOC and COC are
strictly harder than wVI and VI, respectively.

\begin{theorem}
  \label{t-complete}
  \pname{Weighted Component Order Connectivity} is weakly
  \NP-complete on complete graphs.
\end{theorem}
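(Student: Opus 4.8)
The key observation is that on a complete graph the problem collapses to a subset-sum condition, so the plan is to reduce from \pname{Subset Sum}, a classical weakly \NP-complete problem: given positive integers $a_1,\dots,a_n$ and a target $t$, decide whether some subset of the $a_i$ sums to exactly~$t$.

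First I would record the structural fact specific to complete graphs. If $G = K_n$ with weight function $w$ and $W = w(V(G))$, then for every $X \subseteq V(G)$ the graph $G-X$ is again complete, hence connected, so it consists of a single component and $\wcc(G-X) = W - w(X)$. Consequently an instance $(K_n, w, k, \ell)$ is a yes-instance of \wcoc{} if and only if there exists a set $X$ with $w(X) \le k$ and $W - w(X) \le \ell$, that is, with $W - \ell \le w(X) \le k$. In words, we must realize some subset weight inside the interval $[W-\ell,\, k]$.

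Next I would choose the two integers so as to pin this interval down to a single value. Given a \pname{Subset Sum} instance $(a_1,\dots,a_n,t)$ with $W = \sum_i a_i$, build the complete graph on vertices $v_1,\dots,v_n$, set $w(v_i) = a_i$, and put $k = t$ and $\ell = W - t$. Then the feasibility interval becomes $[W-\ell,\, k] = [t,t] = \{t\}$, so by the observation above the constructed instance is a yes-instance of \wcoc{} exactly when some subset of the $a_i$ sums to~$t$. The reduction is clearly polynomial and produces polynomially many vertices whose weights are bounded by the input numbers; thus large weights, and only large weights, are needed, which is precisely what makes the hardness weak rather than strong. This is consistent with the fact that on complete graphs the problem is solvable in pseudo-polynomial time by the standard subset-sum dynamic program. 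Membership in \NP{} is immediate, since a candidate set $X$ can be verified in polynomial time.

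There is essentially no hard step here; the only point requiring care is the equivalence chain in the structural observation, in particular handling the degenerate cases $X = \emptyset$ and $X = V(G)$ correctly so that the equality $w(X) = t$ is forced precisely when $0 \le t \le W$. Choosing $k = t$ and $\ell = W - t$ is what makes both inequalities $w(X) \le k$ and $W - w(X) \le \ell$ bind simultaneously, and this simultaneous tightness is the crux of the argument.
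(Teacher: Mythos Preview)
Your proof is correct and follows essentially the same line as the paper's: both exploit that on a complete graph $G-X$ is a single component, so the instance is feasible iff some subset weight lands in $[W-\ell,\,k]$, and both force this interval to collapse to a point. The only difference is the source problem---the paper reduces from \pname{Partition} and sets $k=\ell=W/2$, whereas you reduce from the slightly more general \pname{Subset Sum} with $k=t$ and $\ell=W-t$; the underlying idea is identical.
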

\begin{proof}
  We reduce from \pname{Partition}, which is the problem of
  determining whether a multiset $A$ of positive integers can be
  partitioned into two subsets $A_1$ and $A_2$ such that the sum of
  the elements in $A_1$ equals the sum of the elements in $A_2$.  This
  problem is well-known to be \NP-complete~\cite{garey1979computers}.
  Given an instance $A$ of \pname{Partition} with $n$ elements $(a_1,
  \dots, a_n)$, we construct an instance $(G,w,k,\ell)$ of \wcoc{} as
  follows.  We define $G$ to be a complete graph with vertex set $V =
  \{v_1,\ldots,v_n\}$, and the weight function $w$ is defined by
  setting $w(v_i) = a_i$ for every $i\in \{1,\ldots,n\}$.  Let $W =
  1/2 \sum_{i = 1}^n a_i$.  We set $k = \ell = W$.

  Suppose~$A$ can be partitioned into two subsets~$A_1$ and~$A_2$ such
  that the sum of the elements in~$A_1$ equals the sum of the elements
  in~$A_2$ equals~$W$.  Let $X = \{v_i\in V \mid a_i\in A_1\}$ be the
  subset of vertices of~$G$ corresponding to the set~$A_1$.  Since~$G$
  is a complete graph and $w(V) = 2W$, is clear that $w(X) = W$ and
  $\wcc(G-X) = W$, implying that $(G,w,k,\ell)$ is a yes-instance of
  \wcoc.  The reverse direction is similar: if there exists a subset
  $X'\subseteq V$ with $w(X') = W$ and $\wcc(G-X') = W$, then the
  partition $X',V\setminus X'$ of~$V$ corresponds to a desired
  partition $A_1,A_2$ of~$A$.
\end{proof}

\begin{theorem}
  \label{t-splitcoc}
  \coc{} is \NP-complete on split graphs.
\end{theorem}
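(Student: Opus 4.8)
The plan is to mirror the reduction from \pname{Clique} used for Theorem~\ref{t-splithard}, but to adapt it to the \emph{unweighted} setting. Given an instance $(G,k)$ of \pname{Clique} with $n=|V(G)|$ and $m=|E(G)|$, I would build the incidence split graph $G^*=(C^*,I^*,E^*)$ of $G$ and output the \coc{} instance $(G^*,k',\ell)$ with $k'=k$ and $\ell=n+m-\binom{k}{2}-k$. As in Theorem~\ref{t-splithard}, I may assume $k<n$ and $\binom{k}{2}\le m$, since otherwise $(G,k)$ is a trivial no-instance of \pname{Clique}. The target $\ell$ is chosen precisely so that condition~(iii) of Lemma~\ref{l-incidence} reads ``there exists $X\subseteq C^*$ with $|X|\le k$ and $n(G^*-X)\le\ell$'', i.e.\ exactly a \coc{} solution confined to the clique side. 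Membership in \NP{} is immediate, and $G^*$ is a split graph, so it only remains to prove correctness of the equivalence.

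The forward direction is handed to me by Lemma~\ref{l-incidence}: if $G$ has a $k$-clique, then implication (i)$\Rightarrow$(iii) yields a set $X\subseteq C^*$ with $|X|\le k$ and $n(G^*-X)\le\ell$, which is a valid solution of the \coc{} instance. The work is all in the reverse direction, where a \coc{} solution $X\subseteq V(G^*)$ is \emph{a priori} allowed to contain vertices of the independent set $I^*$, whereas Lemma~\ref{l-incidence}(iii) only speaks about sets contained in $C^*$.

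Here lies the main obstacle, and the reason this is not a verbatim copy of the weighted proof: in Theorem~\ref{t-splithard} the independent-set vertices were rendered harmless by Lemma~\ref{l-simplicial}, but that lemma is tailored to the vertex-integrity objective $w(X)+\wcc(G-X)$ and does \emph{not} transfer to \coc. Indeed, deleting a simplicial vertex from a solution lowers $|X|$ but may enlarge the heaviest remaining component, which is exactly what the constraint $n(G^*-X)\le\ell$ forbids; and since \coc{} is unweighted we cannot simply make the $I^*$-vertices too expensive to delete. My plan is therefore to prove a direct \emph{exchange} step exploiting the fact that each vertex $v_e\in I^*$ has degree at most two and merely hangs off the clique. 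Concretely, given a solution $X$ and a vertex $v_e\in X\cap I^*$ with $e=xy$, I would replace $v_e$ by a clique endpoint: if $v_x\notin X$ set $X:=(X\setminus\{v_e\})\cup\{v_x\}$, and symmetrically for $v_y$; if both endpoints already lie in $X$, simply delete $v_e$ from $X$. Each such step keeps $|X|\le k$, strictly decreases $|X\cap I^*|$, and---this is the point to verify carefully---does not increase $n(G^*-X)$: putting $v_e$ back adds at most one vertex to the big component, while deleting a clique vertex removes at least one (and possibly isolates further $I^*$-vertices into components of size one), so the heaviest component can only shrink. A short check using $k<n$ shows that whenever $v_e\in X\cap I^*$ we have $|X\cap C^*|\le k-1<n-1$, so the remaining clique is never emptied and the big component persists throughout. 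Iterating drives $X$ into $C^*$ while preserving feasibility, after which Lemma~\ref{l-incidence}(iii)$\Rightarrow$(i) delivers a $k$-clique in $G$ and completes the reduction.
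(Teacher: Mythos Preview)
Your proposal is correct and uses the same reduction as the paper: from \pname{Clique} via the incidence split graph $G^*$ and Lemma~\ref{l-incidence}. Your choice $\ell=n+m-\binom{k}{2}-k$ is the one that lines up with condition~(iii) of that lemma (the paper writes $\ell=n+m-\binom{k}{2}$, which appears to be a slip), and the paper then simply declares the equivalence immediate.

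The one place you do more work than the paper is the exchange argument pushing a COC solution $X$ into $C^*$. Your exchange step is sound, but in fact unnecessary with this value of~$\ell$: for \emph{any} $X\subseteq V(G^*)$ with $|X|\le k$, the component of $G^*-X$ containing the surviving clique has size $n+m-|X|-Z'$, where $Z'\le\binom{|X\cap C^*|}{2}$ counts the $I^*$-vertices outside $X$ whose two neighbours both lie in $X\cap C^*$. Thus $n(G^*-X)\le\ell$ forces $|X|+Z'\ge\binom{k}{2}+k$, while on the other hand $|X|+Z'\le |X|+\binom{|X|}{2}=\binom{|X|+1}{2}\le\binom{k+1}{2}=\binom{k}{2}+k$. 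Equality must hold throughout, which immediately yields $|X|=k$, $X\cap I^*=\emptyset$, and $Z'=\binom{k}{2}$, i.e.\ a $k$-clique in $G$. So the counting argument inside Lemma~\ref{l-incidence} already absorbs the possibility of $I^*$-vertices in $X$; your exchange route reaches the same conclusion by a slightly longer but perfectly valid path.
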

\begin{proof}
  We give a reduction from the \NP-hard problem \pname{Clique}.  Let
  $(G,k)$ be an instance of \pname{Clique} with $n = |V(G)|$ and $m =
  |E(G)|$.  Let $G^* = (C^*,I^*,E^*)$ be the split incidence graph of
  $G$, and let $\ell = n+m-\binom{k}{2}$.  By Lemma~\ref{l-incidence},
  there is a clique of size~$k$ in $G$ if and only if there exists a
  set $X\subseteq C^*$ such that $|X|\leq k$ and $n(G^*-X)\leq
  n+m-\binom{k}{2}$.  This immediately implies that $(G,k)$ is a
  yes-instance of \pname{Clique} if and only if $(G^*,k,\ell)$ is a
  yes-instance of \coc.
\end{proof}

We now present a pseudo-polynomial time algorithm, called \algWInt,
that solves \wcoc{} in $O(kn^3)$ time on interval graphs.  We refer to
Figure~\ref{alg:weighted-interval-alg} for pseudocode of the
algorithm.
Finally, we show that we can easily modify the algorithm to also run
in time $O( \ell n^3 )$.

\begin{figure}[ht!]
\begin{algorithm}[H]
\smallskip
Algorithm \algWInt\\
\smallskip
  \KwIn{An instance $(G,w,k,\ell)$ of \wcoc, where $G$ is an interval graph}
  \KwOut{``yes'' if $(G,w,k,\ell)$ is a yes-instance, and ``no'' otherwise}
\BlankLine
\BlankLine
Remove every vertex of weight $0$ from $G$\\
Construct $K_0, \dots, K_{t+1}$ \\
Construct $S_0, \dots, S_t$ \\
Construct $V_{i,j}$ for every $0\leq i<j\leq t$
\BlankLine
\BlankLine
Set all elements of $\dpt$ to $k+1$ \\
Set $\dpt[0] = 0$\\
\BlankLine

\For{$j$ \textup{from} $1$ \textup{to} $t$} {
    \For{$i$ \textup{from} $j-1$ \textup{to} $0$} {
        Let $v_1, \dots, v_{|V_{i,j}|}$ be the vertices of $V_{i,j}$ \\
        Let $w_p=w(v_p)$ for every $p\in \{1,\ldots,|V_{i,j}|\}$ \\
        \If{$w(V_{i,j}) \leq k+\ell$} {
        Let $I = \algMinSup( (w_1, \dots, w_{|V_{i,j}|}),w(V_{i,j}) - \ell)$ \\
        Let $Y_{i,j}=\{v_p \in V_{i,j} \mid p\in I\}$ \\
        $\dpt[j] = \min \begin{cases}
            \dpt[j] \\
            \dpt[i] + w(Y_{i,j}) + w(S_j\setminus S_i)
        \end{cases}$
    }
    }
}

\BlankLine

\KwRet ``yes'' if $\dpt[t]\leq k$, and ``no'' otherwise
\vspace{.3cm}
\end{algorithm}
\caption{Pseudocode of the algorithm $\algWInt$ that solves the \wcoc{}
  problem on interval graphs in $O(kn^3)$ time.}
  \label{alg:weighted-interval-alg}
\end{figure}

Given an instance~$(G,w,k,\ell)$, where~$G$ is an interval graph, the
algorithm first removes every vertex of weight~$0$.  It then computes
a \emph{clique path} of~$G$, i.e., an ordering $K_1,\ldots,K_t$ of the
maximal cliques of~$G$ such that for every vertex $v\in V(G)$, the
maximal cliques containing~$v$ appear consecutively in this ordering.
Since~$G$ is an interval graph, such an ordering exists and can be
obtained in~$O(n^2)$ time~\cite{BoothL76}.  For convenience, we define
two empty sets~$K_0$ and~$K_{t+1}$.  The algorithm now computes the
set $S_i = K_i \cap K_{i+1}$ for every $i\in \{0,\ldots,t\}$.  Observe
that~$S_0$ and~$S_t$ are both empty by construction, and that the
non-empty sets among $S_1,\dots, S_{t-1}$ are exactly the minimal
separators of $G$ (see, e.g.,~\cite{ho1989counting}).  For every $q
\in \{0,\ldots,t+1\}$, we define $G_q = G[\bigcup_{i = 0}^q K_i]$.
Also, for any two integers $i,j$ with $0\leq i<j\leq t$, the algorithm
computes the set
\[
V_{i,j} = \bigcup_{p = i+1}^j K_p\setminus (S_i\cup S_j).
\]
Informally speaking, the set~$V_{i,j}$ consists of the vertices of~$G$
that lie ``in between'' separators~$S_i$ and~$S_j$.

\medskip

Let us give some intuition behind the next phase of the algorithm.
Suppose $(G,w,k,\ell)$ is a yes-instance of \wcoc{}, and let~$X$ be a
solution for this instance.  Generally speaking,~$X$ fully contains
some minimal separators of~$G$ whose removal is necessary to break the
graph into pieces, as well as additional vertices that are deleted
from these pieces with the sole purpose of decreasing the weight of
each piece to at most~$\ell$.  The constructed clique path
$K_1,\ldots,K_t$ corresponds to a linear order of the minimal
separators $S_1,\ldots,S_{t-1}$ of~$G$.  We will use this linear
structure to find a minimum solution by doing dynamic programming over
the minimal separators of~$G$.

For every $q\in \{0,\ldots,t\}$, let~$k_q$ denote the smallest integer
such that there exists a set $X \subseteq V(G)$ satisfying the
following three properties:
\begin{itemize}
\item $w(X) = k_q$;
\item $S_q$ is a subset of $X$;
\item $X$ is a solution for the instance $(G_q,w, k_q, \ell)$.
\end{itemize}
In other words,~$X$ is a ``cheapest'' solution for $(G_q,w,k_q,\ell)$
that fully contains the minimal separator~$S_q$.
The algorithm now constructs an array $\dpt$ with $t+1$ entries, each
of which is an integer from $\{0,\ldots,k+1\}$.  Initially, all the
elements of the array are set to $k+1$.  For any $q\in
\{0,\ldots,t\}$, we say that the entry $\dpt[q]$ has \emph{reached
  optimality} if
\[
\dpt[q] = \begin{cases}
  k_q \mbox{\indent\indent if $k_q \leq k$} \\
  k+1 \mbox{\indent otherwise\,.}
\end{cases}
\]     
Recall that $S_t = \emptyset$ and that $G_t = G$.  Hence, if $\dpt[t]$ has
reached optimality, then the input instance $(G,w,k,\ell)$ is a
yes-instance if and only if $\dpt[t]\leq k$.

The algorithm uses a subroutine $\algMinSup$ that, given a multiset
of~$r$ weights $(w_1, \ldots, w_r)$ and a target~$W$ such that
$\sum_{i = 1}^r w_i \geq W$, finds a set $I \subseteq \{1, \dots, r\}$
such that $\sum_{i \in I} w_i$ is minimized with respect to the
constraint $\sum_{i \in I} w_i \geq W$.  Note that this subroutine
$\algMinSup$ can be implemented to run in time~$O(Wr)$ using the
classical dynamic programming algorithm for \pname{Subset Sum}.

\begin{lemma}
  \label{lemma:algwint-correctness}
  Given an instance $(G,w,k,\ell)$ of \wcoc, where $G$ is an interval
  graph, the algorithm $\algWInt$ returns ``yes'' if and only if
  $(G,w,k,\ell)$ is a yes-instance, and ``no'' otherwise.
\end{lemma}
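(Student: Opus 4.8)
The plan is to prove, by induction on $q\in\{0,\ldots,t\}$, that after the algorithm completes the outer-loop iteration $j=q$ the entry $\dpt[q]$ has reached optimality, i.e.\ $\dpt[q]=k_q$ if $k_q\le k$ and $\dpt[q]=k+1$ otherwise. This is well-founded because the assignment to $\dpt[j]$ only reads entries $\dpt[i]$ with $i<j$, which are finalized in earlier outer iterations. The base case is immediate: $G_0$ is the empty graph and $S_0=\emptyset$, so $k_0=0$ and the initialization $\dpt[0]=0$ is optimal. Before the inductive step I would record the structural backbone of the argument, obtained directly from the fact that $K_1,\ldots,K_t$ is a clique path: for every $0\le i<j\le t$ the vertex set decomposes as the disjoint union $V(G_j)=V(G_i)\sqcup V_{i,j}\sqcup(S_j\setminus S_i)$, and every edge of $G$ with exactly one endpoint in $V_{i,j}$ has its other endpoint in $S_i\cup S_j$. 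In other words, deleting $S_i$ and $S_j$ entirely detaches the block $V_{i,j}$ from the rest of the graph; this separating property is what makes the requirement $S_q\subseteq X$ in the definition of $k_q$ behave like a clean interface between $G_q$ and the remainder.

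For the inductive step I would first establish \emph{soundness}: every value the algorithm can assign to $\dpt[j]$ is realized by an actual feasible set. Fix an assignment using index $i$, let $X_i^\star$ be a minimum set witnessing $k_i=\dpt[i]$ (which exists whenever $\dpt[i]\le k$ by the induction hypothesis), and set $X'=X_i^\star\cup Y_{i,j}\cup(S_j\setminus S_i)$. Since the three blocks are disjoint, $w(X')=\dpt[i]+w(Y_{i,j})+w(S_j\setminus S_i)$, matching the candidate value, and $S_j\subseteq X'$. Because $X'$ contains all of $S_i$ and $S_j$, the separating property confines every component of $G_j-X'$ either to $V(G_i)\setminus X_i^\star$, where it is a component of $G_i-X_i^\star$ of weight at most $\ell$, or to $V_{i,j}\setminus Y_{i,j}$, whose total weight is $w(V_{i,j})-w(Y_{i,j})\le\ell$ by the defining property of $\algMinSup$. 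Hence $\wcc(G_j-X')\le\ell$, so $X'$ is feasible for $(G_j,w,\cdot,\ell)$ and contains $S_j$, giving $k_j\le w(X')$. Consequently every assigned value is at least $\min(k_j,k+1)$: values coming from a valid entry are at least $k_j$, and values using an entry equal to $k+1$ are at least $k+1$. With the initialization this yields $\dpt[j]\ge\min(k_j,k+1)$; in particular $\dpt[j]=k+1$ whenever $k_j>k$.

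For the matching inequality when $k_j\le k$ I would take a minimum set $X$ realizing $k_j$ with $S_j\subseteq X$ and let $i$ be the \emph{largest} index below $j$ with $S_i\subseteq X$ (it exists since $S_0=\emptyset$). Splitting $X$ along the partition of $V(G_j)$: the piece $X\cap V(G_i)$ contains $S_i$ and, again by the separating property, induces exactly the components of $G_j-X$ that lie in $V(G_i)$, hence is a feasible set for $G_i$ containing $S_i$; since $k_i\le w(X\cap V(G_i))\le k_j\le k$, the induction hypothesis gives $w(X\cap V(G_i))\ge k_i=\dpt[i]$. The piece $S_j\setminus S_i$ lies entirely in $X$ and contributes $w(S_j\setminus S_i)$. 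The crux is the piece $X\cap V_{i,j}$: I will argue below that $V_{i,j}\setminus X$ is connected, so it is a single component of $G_j-X$ of weight at most $\ell$, whence $w(X\cap V_{i,j})\ge w(V_{i,j})-\ell$; thus $X\cap V_{i,j}$ is feasible for the $\algMinSup$ call and $w(X\cap V_{i,j})\ge w(Y_{i,j})$. Moreover $w(V_{i,j})\le w(X\cap V_{i,j})+\ell\le k+\ell$, so the guard holds and this $i$ is actually considered. Summing the three bounds gives $k_j=w(X)\ge \dpt[i]+w(Y_{i,j})+w(S_j\setminus S_i)$, so the algorithm sets $\dpt[j]\le k_j$. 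Together with the previous paragraph this establishes optimality of $\dpt[j]$.

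The main obstacle is exactly the claim that, for the maximal choice of $i$, the block $V_{i,j}\setminus X$ is connected, and here I would invoke the interval structure directly. If this induced subgraph were disconnected, then along the clique path there would be a position $p$ with $i<p<j$ spanned by no surviving vertex of $V_{i,j}$; this forces all of $S_p\cap V_{i,j}$ into $X$, and the remaining vertices of $S_p$ already lie in $S_i\cup S_j\subseteq X$, so $S_p\subseteq X$ with $i<p<j$, contradicting the maximality of $i$. This is the only place where we genuinely use that $G$ is an interval graph: for general graphs $V_{i,j}\setminus X$ could split into several components of total weight exceeding $\ell$, and then the $\algMinSup$ step would over-delete. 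Finally, instantiating the induction at $q=t$ and using $G_t=G$ and $S_t=\emptyset$, so that $k_t$ is the minimum weight of any feasible set, we conclude $\dpt[t]\le k$ iff $k_t\le k$ iff $(G,w,k,\ell)$ is a yes-instance, which is precisely the algorithm's output. The only remaining point is that the initial removal of weight-$0$ vertices preserves the answer, which is immediate because such vertices can always be added to a solution for free and deleting them can only shrink components.
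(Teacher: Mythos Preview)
Your proof is correct and follows essentially the same approach as the paper's: both argue by induction on $q$, establish soundness by exhibiting the explicit witness $X'=X_i^\star\cup Y_{i,j}\cup(S_j\setminus S_i)$, and establish tightness by picking the largest $i<j$ with $S_i\subseteq X$ and using that $G[V_{i,j}\setminus X]$ is then connected. The only cosmetic differences are that the paper casts the tightness direction as a proof by contradiction and proves connectedness by constructing an explicit walk through the nonempty sets $S_z\setminus X$ for $i<z<j$, whereas you argue the contrapositive via a ``gap'' position~$p$; your partition $V(G_j)=V(G_i)\sqcup V_{i,j}\sqcup(S_j\setminus S_i)$ is stated more explicitly than in the paper but is used identically.
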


\begin{proof}
  Recall that in order to prove the lemma, it suffices to prove that
  by the end of the algorithm, $\dpt[t]$ has reached optimality.  For
  each $j\in \{1,\ldots, t\}$, we define $P_1(j)$ to be the statement
  ``at the start of iteration $j$ of the outer loop, $\dpt[i]$ has
  reached optimality for every $i<j$,'' and we define $P_2(j)$ to be
  the statement ``at the end of iteration $j$ of the outer loop,
  $\dpt[i]$ has reached optimality for every $i\leq j$.'' 

\smallskip
\noindent
\emph{Claim 1.  For any $q\in \{1,\ldots,t-1\}$, it holds that $P_2(q)$
  implies $P_1(q+1)$.}

 \medskip
Observe that Claim~1 trivially holds. We also need the following claim.

 \medskip
 \noindent
 \emph{Claim 2.  For any $q\in \{1,\ldots,t\}$, it holds that $P_1(q)$
   implies $P_2(q)$.}

\smallskip
\noindent
In order to prove Claim~2, we first prove that if $\dpt[q] \leq k$ at the
end of iteration~$q$, then there is a solution of weight $\dpt[q]$ for
the instance $(G_q,w,\dpt[q],\ell)$ that contains $S_q$.  Let $r < q$
be such that $\dpt[q] = \dpt[r] + w(Y_{r,q}) + w(S_q \setminus S_r)$.
Due to our initialization of the table~$\dpt$ and the assumption that
$\dpt[q] \leq k$, such an~$r$ exists.  Because we assume that $P_1(q)$
holds, $\dpt[r]$ has reached optimality.  Hence, there is a set
$X_r\subseteq V(G_r)$ such that $w(X_r) = \dpt[r]$,~$X_r$
contains~$S_r$, and~$X_r$ is a solution for $(G_r,w,\dpt[r],\ell)$.
Consider the set~$Y_{r,q}$ that was constructed using the subroutine
$\algMinSup$.  Recall that $w(Y_{r,q})\geq w(V_{r,q})-\ell$ and hence
$w(V_{r,q}) \leq w(Y_{r,q}) + \ell$.  Furthermore, by assumption, the
only component of $G_q - (X_r \cup S_q)$ possibly of weight larger
than~$\ell$ is a component of $G[V_{r,q}]$.  Let $X_q = X_r \cup S_q
\cup Y_{r,q}$.  Due to the correctness of \algMinSup, it holds that
any component of~$G_q - X_q$ is of weight at most~$\ell$.  Hence~$X_q$
is a solution for $(G_q,w,\dpt[q],\ell)$ that contains~$S_q$ and has
weight~$\dpt[q]$.

It remains to prove that for any set $X\subseteq V(G_q)$ such that $S_q
\subseteq X$ and $\wcc(G_q-X)\leq \ell$, it holds that $w(X) \geq
\dpt[q]$.  Assume, for contradiction, that there exists a set
$X\subseteq V(G_q)$ such that $S_q \subseteq X$, $\wcc(G_q-X)\leq
\ell$, and $w(X) < \dpt[q]$.  Let $r < q$ be the largest index such
that $S_r \subseteq X$.  Observe that $r$ exists due to the fact that
$S_0 = \emptyset$.  We claim that
\[
w(V_{r,q}\setminus X)>\ell \, .
\]
First consider the case where $w(V_{r,q})>k+\ell$.  Then
$w(V_{r,q}\setminus X)\geq w(V_{r,q})-w(X)>k+\ell-w(X)\geq \ell$,
where the last inequality follows from the fact that $w(X)<\dpt[q]\leq
k+1$.

Now consider the case where $w(V_{r,q})\leq k+\ell$.  Then by the
algorithm, $\dpt[q]\leq \dpt[r]+w(Y_{r,q})+w(S_q\setminus S_r)$.  By
definition, we know that the sets $V(G_r)$, $V_{r,q}$, and
$S_q\setminus S_r$ form a partition of $V(G_q)$, and hence $w(X \cap
V(G_r)) + w(X \cap V_{r,q}) + w(X \cap (S_q \setminus S_r)) = w(X) <
\dpt[q] \leq \dpt[r] + w(Y_{r,q}) + w(S_q\setminus S_r)$.  Since $S_r
\subseteq X$ and $P_1(q)$ holds by assumption, we have that
$\dpt[r]\leq w(X \cap V(G_r))$.  From $S_q \subseteq X$ it follows
that $w(X \cap (S_q\setminus S_r)) = w(S_q\setminus S_r)$ and since we
know that $w(X \cap V(G_r)) \geq \dpt[r]$ it follows immediately that
$w(X \cap V_{r,q}) < w(Y_{r,q})$.  By the correctness of $\algMinSup$
it follows that $w(X \cap V_{r,q}) < w(V_{r,q}) - \ell$.  Hence, we
find that $w(V_{r,q} \setminus X) = w(V_{r,q}) - w(X \cap V_{r,q}) >
w(V_{r,q}) - (w(V_{r,q}) - \ell) = \ell$.
        
We now prove that~$G[V_{r,q} \setminus X]$ is connected.  Let~$u$ and~$v$ be
two distinct vertices from $V_{r,q} \setminus X$.  We will now prove
that~$u$ and~$v$ belong to the same component of $G[V_{r,q}\setminus
X]$.  If there is a maximal clique~$K_i$ of~$G$ containing both~$u$
and~$v$, then this trivially holds.  Suppose this is not the case.
Let~$a$ be the largest number such that~$u \in K_a$ and~$b$ the
smallest number such that~$v \in K_b$.  Assume without loss of
generality that $a < b$.  By construction of $V_{r,q}$, it follows
that $r < a < b \leq q$.  By the definition of $r$, we know that $S_z
\setminus X$ is non-empty for every $z \in [a,b)$.  Consequently, the
vertices in $\bigcup_{z \in [a,b)} S_z$ induce a connected subgraph of
$G[V_{r,q}\setminus X]$, and~$u$ and~$v$ are thus contained in the same
component of this graph.

Recall that $\wcc(G_q-X)\leq \ell$ by the definition of~$X$.  However,
$G[V_{r,q} \setminus X]$ is a component of~$G_q-X$ that has weight
$w(V_{r,q} \setminus X)>\ell$.  This yields the desired contradiction,
and completes the proof of Claim~2.
    
\medskip We now show that $P_1(1)$ holds.  In order to show this, it
suffices to argue that~$\dpt[0]$ has reached optimality at the start
of the first iteration of the outer loop.  Recall that~$\dpt[0]$ is
set to~$0$ during the initialization phase, so~$\dpt[0]$ equals~$0$ at
the start of the first iteration of the outer loop.  Since $S_0 =
V(G_0) = \emptyset$, it holds that $k_0 = 0$.

Since $P_1(1)$ holds, we can repeatedly apply Claims~1 and~2 to deduce
that $P_2(t)$ holds.  Hence, $\dpt[t]$ has reached optimality by the
end of the last iteration of the outer loop.
\end{proof}

\begin{theorem}
  \label{t-intervalweighted}
  \wcoc{} can be solved in $\otilde(\min\{k,\ell\}\cdot n^3)$ time on
  interval graphs.
\end{theorem}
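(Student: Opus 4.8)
The plan is to observe that correctness of \algWInt{} is already guaranteed by Lemma~\ref{lemma:algwint-correctness}, so the whole theorem reduces to a running-time analysis together with one symmetric modification of the inner subroutine. I would prove the two bounds $\otilde(kn^3)$ and $\otilde(\ell n^3)$ separately and then simply run whichever of the two versions has the smaller parameter, which yields the claimed $\otilde(\min\{k,\ell\}\cdot n^3)$ bound.

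For the $\otilde(kn^3)$ bound, I would first account for the preprocessing: the clique path $K_1,\ldots,K_t$ is computed in $O(n^2)$ time~\cite{BoothL76}, the separators $S_i$ and all sets $V_{i,j}$ for $0\le i<j\le t$ are obtained in $O(n^3)$ total time, and $t\le n$ since an interval graph has at most $n$ maximal cliques. The two nested loops perform $O(t^2)=O(n^2)$ iterations, and in each iteration the only nontrivial cost is the call to \algMinSup{} on the weights of $V_{i,j}$ with target $w(V_{i,j})-\ell$. This call is made only when $w(V_{i,j})\le k+\ell$, so the target is at most $k$, and since $|V_{i,j}|\le n$ the underlying subset-sum dynamic program runs in $O(kn)$ time. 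Multiplying the per-iteration cost by the number of iterations gives the overall bound $O(kn^3)$.

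For the $\otilde(\ell n^3)$ bound, the key observation I would use is that minimizing the deleted weight and maximizing the retained weight are complementary: computing $Y_{i,j}$ of minimum weight subject to $w(Y_{i,j})\ge w(V_{i,j})-\ell$ is the same as retaining a subset $V_{i,j}\setminus Y_{i,j}$ of maximum weight subject to the constraint that this retained weight is at most $\ell$. I would therefore replace the call to \algMinSup{} by a call to a subroutine \algMaxInf{} that, given the weights of $V_{i,j}$ and the target $\ell$, returns a subset of maximum total weight at most $\ell$, and set $Y_{i,j}$ to be its complement inside $V_{i,j}$. Because the resulting $Y_{i,j}$ has exactly the same weight as before, the value $\dpt[j]$ computed in each iteration is unchanged, so Lemma~\ref{lemma:algwint-correctness} applies to the modified algorithm verbatim. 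The subset-sum dynamic program behind \algMaxInf{} runs in $O(\ell\,|V_{i,j}|)=O(\ell n)$ time, so the same loop analysis now yields $O(\ell n^3)$.

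Combining the two, I would run the $O(kn^3)$ version when $k\le \ell$ and the $O(\ell n^3)$ version otherwise, obtaining the $\otilde(\min\{k,\ell\}\cdot n^3)$ running time. The main obstacle I anticipate is not algorithmic but bookkeeping: I must verify that the guard $w(V_{i,j})\le k+\ell$ simultaneously bounds the subset-sum target by $k$ in the first version and by $\ell$ in the second (so that both dynamic programs stay within the claimed budgets), and that discarding the case $w(V_{i,j})>k+\ell$ is sound in both versions, since in that case any admissible $Y_{i,j}$ would have weight exceeding $k$ and hence could never contribute to a solution of weight at most~$k$. Everything else is a direct instantiation of the already-proven correctness statement.
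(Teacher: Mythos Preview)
Your proposal is correct and follows essentially the same approach as the paper: you invoke Lemma~\ref{lemma:algwint-correctness} for correctness, bound the preprocessing by $O(n^3)$, use the guard $w(V_{i,j})\le k+\ell$ to cap the \algMinSup{} target at~$k$, and obtain the $\ell$-bound by swapping \algMinSup{} for the complementary \algMaxInf{} subroutine. The only addition is your explicit remark on why skipping the case $w(V_{i,j})>k+\ell$ is sound, which the paper leaves implicit in the correctness lemma.
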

\begin{proof}
  Due to Lemma~\ref{lemma:algwint-correctness}, it suffices to prove
  that the algorithm \algWInt{} runs in time $O(sn^3)$, where $s =
  \min\{k,\ell\}$.  Clearly, we can remove all vertices of weight $0$
  in $O(n^2)$ time.  It is well-known that a clique path of an
  interval graph can be constructed in $O(n^2)$ time, and that an
  interval graph has no more than~$n$ maximal cliques~\cite{Ibarra09}.
  Consequently, all the sets $K_0,\ldots,K_{t+1}$ and $S_0,\ldots,S_t$
  can be constructed in $O(n^2)$ time.  Observe that for all $0 \leq i
  < j \leq t$, it holds that $V_{i,j} = V_{i,j-1} \cup (V(K_j)
  \setminus (S_i \cup S_j))$.  Hence, once the sets
  $K_0,\ldots,K_{t+1}$ and $S_0,\ldots,S_t$ have been constructed, the
  sets $V_{i,j}$ can be computed in $O(n^3)$ time using a
  straightforward dynamic programming procedure.
  
  We claim that the body of the inner loop runs in time
  $\otilde\left(k n\right)$.  Observe that the body of this loop is
  only executed if $w(V_{i,j})\leq k+\ell$.  Since $|V_{i,j}|\leq n$
  and $w(V_{i,j})-\ell\leq k+\ell-\ell = k$, the algorithm
  \algMinSup{} solves the instance $( \{w_1, \dots,
  w_{|V_{i,j}|}\},w(V_{i,j}) - \ell)$ in $O(kn)$ time, which is
  therefore also the time it takes to obtain $Y_{i,j}$.  Clearly, the
  value of $\dpt[j]$ can be computed in $\otilde(n)$ time.  Since the
  inner loop is executed $O(n^2)$ times, we conclude that $\algWInt$
  terminates in time $\otilde(kn^3)$.
  
  It remains to argue why \wcoc{} can be solved in time $O(\ell n^3)$
  in case $\ell< k$.  Recall the following two lines from the inner
  loop of the algorithm \algWInt{}, explaining how we obtain the set
  $Y_{i,j}$:
  
  \smallskip
  \noindent {\small \hspace*{2.05cm} Let $I = \algMinSup( \{w_1, \dots,
    w_{|V_{i,j}|}\},w(V_{i,j}) - \ell)$\\
    \hspace*{2.05cm} Let $Y_{i,j} = \{v_p \in V_{i,j} \mid p\in I\}$ }
  
  \medskip
  \noindent
  The idea is to replace the subroutine $\algMinSup$ by a subroutine
  \algMaxInf{} that, given a multiset of weights $(w_1, \ldots, w_n)$
  and a target $W$, finds a set $I \subseteq \{1,\dots, n\}$ such that
  $\sum_{i \in I} w_i$ is maximized under the constraint $\sum_{i \in
    I} w_i \leq W$.  It is clear that \algMaxInf{}, just like
  \algMinSup{}, can be solved in $O(Wn)$ time.  By replacing the above
  two lines in the inner loop by the following two lines, we can
  obtain the exact same set $Y_{i,j}$ in $O(\ell n)$ time:
  
  \smallskip
  \noindent {\small
    \hspace*{2.05cm} Let $I  =  \algMaxInf((w_1, \dots, w_{|V_{i,j}|}),\ell)$ \\
    \hspace*{2.05cm} Let $Y_{i,j} = \{v_p \in V_{i,j} \mid p\notin I\}$ }
  
  \medskip
  \noindent
  This slight modification yields an algorithm for solving \wcoc{} on
  interval graphs in $\otilde(\ell n^3)$ time.
\end{proof}

\begin{theorem}
  \label{theorem:quad-time-coc}
  \coc{} can be solved in $O(n^2)$ time on interval graphs.
\end{theorem}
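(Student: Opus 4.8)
The plan is to run the algorithm \algWInt{} from Lemma~\ref{lemma:algwint-correctness} on the unit-weight instance and argue that, in this special case, every step can be carried out fast enough to bring the total running time down to $O(n^2)$. Correctness is immediate: setting $w(v)=1$ for all $v$ is just a special case of \wcoc, so Lemma~\ref{lemma:algwint-correctness} already guarantees that $\dpt[t]$ reaches optimality and that the algorithm answers correctly. Hence the entire argument concerns the running time. The first saving comes from the subroutine \algMinSup: when all weights equal $1$, a cheapest subset of $V_{i,j}$ whose weight is at least $w(V_{i,j})-\ell=|V_{i,j}|-\ell$ consists of any $\max\{0,|V_{i,j}|-\ell\}$ vertices, so $w(Y_{i,j})=\max\{0,|V_{i,j}|-\ell\}$ can be read off in $O(1)$ time instead of $O(kn)$ time. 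Note that the recurrence only uses the value $w(Y_{i,j})$ and never the set $Y_{i,j}$ itself, so nothing has to be materialised. This saving alone removes the $\min\{k,\ell\}$ factor and leaves an $O(n^3)$ bound; the remaining task is to evaluate the inner loop in amortized constant time and to avoid ever building the sets $V_{i,j}$ explicitly.

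The key structural observation is that, along the clique path $K_1,\dots,K_t$, each vertex $v$ occupies a contiguous block of cliques, say $K_{l(v)},\dots,K_{r(v)}$, and the indices $l(v),r(v)$ can be extracted from the clique path in $O(n^2)$ time. Unwinding the definitions of $S_i=K_i\cap K_{i+1}$ and of $V_{i,j}$ in terms of $l$ and $r$ yields the clean descriptions $V_{i,j}=\{v : i<l(v)\le r(v)\le j\}$ and $S_j\setminus S_i=\{v : i<l(v)\le j,\ r(v)\ge j+1\}$ for all $0\le i<j\le t$. Consequently every quantity appearing in the recurrence is determined by the two cardinalities $|V_{i,j}|$ and $|S_j\setminus S_i|$, since $w(Y_{i,j})$ depends only on $|V_{i,j}|$ and $w(S_j\setminus S_i)=|S_j\setminus S_i|$.

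First I would compute these cardinalities incrementally rather than from scratch. For a fixed outer index $j$, introduce one-dimensional counters $d_j(a)=|\{v : l(v)=a,\ r(v)\le j\}|$ and $e_j(a)=|\{v : l(v)=a,\ r(v)\ge j+1\}|$, indexed by $a\in\{0,\dots,t\}$. These satisfy $|V_{i,j}|=\sum_{a=i+1}^{j} d_j(a)$ and $|S_j\setminus S_i|=\sum_{a=i+1}^{j} e_j(a)$, so as the inner loop lets $i$ descend from $j-1$ to $0$ both cardinalities are updated by a single addition per step, giving $O(1)$ work per inner iteration. Between consecutive outer iterations the arrays are maintained in place: passing from $j-1$ to $j$ only affects vertices $v$ with $r(v)=j$, each of which triggers the updates $d(l(v))\leftarrow d(l(v))+1$ and $e(l(v))\leftarrow e(l(v))-1$, so the total maintenance cost over all $j$ is only $O(n)$. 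Keeping the guard ``$|V_{i,j}|\le k+\ell$'', now an $O(1)$ comparison that is still needed for correctness, the body of the inner loop runs in constant time. The double loop performs $O(t^2)=O(n^2)$ iterations, and together with the $O(n^2)$ preprocessing this yields the claimed $O(n^2)$ bound.

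The step I expect to require the most care is deriving the two closed-form descriptions of $V_{i,j}$ and $S_j\setminus S_i$ and then verifying that the in-place maintenance of $d$ and $e$ is consistent with the order in which the two loops visit the pairs $(i,j)$ — in particular fixing the exact moment at which the $r(v)=j$ updates are applied so that the arrays represent $d_j$ and $e_j$ precisely when the inner loop reads them. This bookkeeping, rather than any new algorithmic idea, is what makes the constant amortized cost per inner iteration rigorous.
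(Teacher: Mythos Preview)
Your proposal is correct and follows essentially the same approach as the paper: both observe that with unit weights the \algMinSup{} call collapses to $w(Y_{i,j})=\max\{0,|V_{i,j}|-\ell\}$ in $O(1)$ time, and both reduce the remaining work to obtaining $|V_{i,j}|$ and $|S_j\setminus S_i|$ in amortized $O(1)$ per pair $(i,j)$ after $O(n^2)$ preprocessing. The only difference is bookkeeping: the paper expresses these two cardinalities via inclusion--exclusion in terms of prefix-union sizes $|\bigcup_{p\le q}K_p|$, the values $|S_q|$, and the intersections $|S_i\cap S_j|$ (the latter computed by sorting each $S_i$ by rightmost-clique index), whereas you derive the direct closed forms $V_{i,j}=\{v:i<l(v),\ r(v)\le j\}$ and $S_j\setminus S_i=\{v:i<l(v)\le j,\ r(v)\ge j+1\}$ and maintain the counts incrementally with the counter arrays $d_j,e_j$. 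Your scheme is a little tidier (one mechanism instead of several precomputed tables), but neither approach buys anything substantive over the other.
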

\begin{proof}
  We describe a modification of the algorithm \algWInt{}, called
  \ucoc, that solves the unweighted \coc{} problem in $O(n^2)$ time on
  interval graphs.  There are two reasons why the algorithm \algWInt{}
  does not run in $O(n^2)$ time: constructing all the sets $V_{i,j}$
  takes $O(n^3)$ time in total, and each of the $O(n^2)$ executions of
  the inner loop takes $O(kn)$ time, which is the time taken by the
  subroutine \algMinSup{} to compute the set $Y_{i,j}$ of vertices
  that are to be deleted.
  
  Recall that for every $j\in \{1,\ldots,t\}$ and every $i\in
  \{0,\ldots,j-1\}$, the set $Y_{i,j}$ computed by the algorithm
  \algWInt{} is defined to be the minimum-weight subset of $V_{i,j}$
  for which the weight of the subgraph $G[V_{i,j}]-Y_{i,j}$ is at most
  $\ell$.  Also recall that once the set $Y_{i,j}$ is computed, the
  value of $\dpt[j]$ is updated as follows:
  
  \medskip
  \noindent
  {\small
    \hspace*{2.05cm} $\dpt[j]  =  \min \begin{cases}
      \dpt[j] \\
      \dpt[i] + w(Y_{i,j}) + w(S_j\setminus S_i)
    \end{cases}$ }
  
  \bigskip When solving the unweighted variant of the problem, we can
  decrease the weight (i.e., order) of the subgraph $G[V_{i,j}]$ to at
  most $\ell$ by simply deleting $|V_{i,j}|-\ell$ vertices from
  $V_{i,j}$ in a greedy manner.  In other words, it is no longer
  important to decide \emph{which} vertices to delete from $V_{i,j}$,
  but only \emph{how many} vertices to delete.  This means that we can
  replace the entire body of the inner loop by the following line:
  
  \medskip
  \noindent
  {\small
    \hspace*{2.05cm} $\dpt[j]  =  \min \begin{cases}
      \dpt[j] \\
      \dpt[i] + (|V_{i,j}|-\ell) + |S_j\setminus S_i|
    \end{cases}$ }
  
  \medskip
  \noindent
  Hence
  it suffices to argue that we can precompute the values $|V_{i,j}|$
  and $|S_j\setminus S_i|$ for every $j\in \{1,\ldots,t\}$ and $i\in
  \{0,\ldots,j-1\}$ in $O(n^2)$ time in total.
  
  Recall that $V_{i,j} = \bigcup_{p = i+1}^j K_p\setminus (S_i \cup S_j)$ by
  definition, so
  \[
  |V_{i,j}| = |\bigcup_{p = i+1}^j K_p| - |S_i|-|S_j|+|S_i\cap S_j| \, .
  \]
  Moreover, it is clear that
  \[
  |S_j\setminus S_i| = |S_j| - |S_i\cap S_j| \, .
  \]
  
  The algorithm \ucoc{} starts by computing the sets
  $K_0,\ldots,K_{t+1}$ and $S_0,\ldots,S_t$ as before in $O(n^2)$
  time, as well as the cardinalities of these sets.  For each $v\in
  V(G)$, let $L(v)$ denote the largest index $i$ such that $v\in K_i$.
  Observe that we can compute the value $L(v)$ for all $v\in V(G)$ in
  $O(n^2)$ time in total.  The algorithm then computes the value
  $|\bigcup_{p = 0}^i K_p|$ for every $i \in \{0,\ldots,t\}$.  Using
  these values, it then computes the value
  \[
  |\bigcup_{p = i+1}^j K_p| = |\bigcup_{q = 0}^j K_q| - |\bigcup_{r =
    0}^i K_r|+|K_i\cap K_{i+1}| = |\bigcup_{q = 0}^j K_q| -
  |\bigcup_{r = 0}^i K_r|+|S_i|
  \]
  for every $j \in \{1, \ldots, t\}$ and every $i\in \{0, \ldots, j - 1
  \}$.  Observe that this can also be done in $O(n^2)$ time in total
  since all the terms in the expression have been precomputed.
  
  It remains to show that we can compute the value $|S_i\cap S_j|$ for
  all indices $i$ and $j$ with $0 \leq i<j\leq t$ in $O(n^2)$ time in
  total.  Let us fix an index $i\in \{0,\ldots,t\}$.  Since we
  precomputed the $L$-value of each vertex and we can order the
  vertices in $S_i$ by increasing $L$-value in $O(n)$ time, we can
  compute the value $|S_i \cap S_j| = |\{v\in S_i \mid L(v) \geq j+1\}|
  = |S_i \cap S_{j-1}| - |\{v \in S_i \mid L(v) = j+1 \}|$
  for all $j \in \{i+1,\ldots,t\}$.
  Observe that the expression $|\{v \in S_i \mid L(v) = j+1 \}|$ can be
  computed for every $j$ by one sweep through $S_i$ since $S_i$ is
  ordered by $L$-values.  Hence the computation of $|S_i \cap S_j|$,
  for a fixed $i$ and every $j$ can be performed in $O(n)$ time.
  This completes the proof.
\end{proof}

To conclude this section, we investigate the parameterized complexity
and kernelization complexity of COC and wCOC.  As mentioned in the
introduction, both problems are para-\NP-hard when parameterized
by~$\ell$ due to the fact that \coc{} is equivalent to \pname{Vertex
  Cover} when $\ell = 1$.  Our next result shows that when restricted
to split graphs, both problems are $\W[1]$-hard when parameterized
by~$k$ or by~$\ell$.

\begin{theorem}
  \label{t-whardsplit}
  \coc{} is $\W[1]$-hard on split graphs when parameterized by~$k$ or
  by~$\ell$.
\end{theorem}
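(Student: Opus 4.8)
The plan is to reduce from \pname{Clique}, which is $\W[1]$-hard when parameterized by the solution size~$k$, and to give two reductions: one that preserves~$k$ (handling the parameterization by~$k$) and one in which the target component bound~$\ell$ is a function of~$k$ alone (handling the parameterization by~$\ell$).

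For the parameterization by~$k$, I would simply reuse the reduction from the proof of Theorem~\ref{t-splitcoc}, which maps an instance $(G,k)$ of \pname{Clique} to the instance $(G^*,k,\ell)$ of \coc{} on the split incidence graph~$G^*$ with $\ell=n+m-\binom{k}{2}$. Since the \coc{}-budget of the produced instance is exactly the \pname{Clique}-parameter~$k$, and the reduction is polynomial, this is already a parameterized reduction witnessing $\W[1]$-hardness of \coc{} on split graphs parameterized by~$k$; nothing more is needed than pointing this out.

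For the parameterization by~$\ell$ I would use a ``transposed'' incidence construction. Given $(H,k)$ with $n=|V(H)|$ and $m=|E(H)|$ (assuming $m\geq\binom{k}{2}$, as otherwise we output a trivial no-instance), build the split graph $G=(C,I,E)$ whose clique is $C=\{c_e\mid e\in E(H)\}$ and whose independent set is $I=\{d_v\mid v\in V(H)\}$, making $c_e$ and $d_v$ adjacent if and only if~$v$ is an endpoint of~$e$. I then set the budget to $k'=m-\binom{k}{2}$ and the component bound to $\ell'=\binom{k}{2}+k$, which depends only on~$k$. For the forward direction, given a $k$-clique~$S$ in~$H$ I keep exactly the $\binom{k}{2}$ clique vertices corresponding to the internal edges of~$S$ and delete the remaining $m-\binom{k}{2}$ vertices of~$C$; the unique non-trivial component is then $\{c_e\mid e\in E(H[S])\}$ together with the $k$ vertices $d_v$ with $v\in S$, of size exactly $\binom{k}{2}+k=\ell'$, while every other $d_v$ becomes isolated.

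The crux is the reverse direction, which I would argue by a direct counting estimate rather than by first normalizing the solution. Given a solution~$X$ with $|X|\leq k'$ and $\wcc(G-X)\leq\ell'$, let $R=C\setminus X$ be the surviving clique vertices, viewed as a set of edges of~$H$, and let $V(R)$ be the set of their endpoints. Since~$R$ is a clique it lies in a single component~$Q$, which consists of~$R$ together with the surviving vertices $d_v$ with $v\in V(R)$; bounding the number of deleted independent-set vertices by $|X\cap I|\leq |R|-\binom{k}{2}$ gives $|Q|\geq |V(R)|+\binom{k}{2}$. The constraint $|Q|\leq\ell'$ then forces $|V(R)|\leq k$, while $|X|\leq k'$ forces $|R|\geq\binom{k}{2}$; since a graph on $|V(R)|\leq k$ vertices has at most $\binom{k}{2}$ edges, all these inequalities must be tight, so $V(R)$ is a $k$-clique in~$H$. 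I expect this counting step, and in particular the bookkeeping of how many independent-set vertices a solution may afford to delete, to be the only delicate point; the remainder is routine.
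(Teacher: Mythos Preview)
Your proposal is correct and uses the same reductions as the paper: for the parameter~$k$ you reuse the incidence split graph from Theorem~\ref{t-splitcoc}, and for the parameter~$\ell$ you use the ``transposed'' incidence split graph with clique $C=\{c_e\mid e\in E(H)\}$ and independent set $I=\{d_v\mid v\in V(H)\}$, setting $k'=m-\binom{k}{2}$ and $\ell'=\binom{k}{2}+k$. The forward direction is identical.

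The one genuine difference is in the reverse direction of the $\ell$-reduction. The paper first \emph{normalizes} the deletion set: by an exchange argument, among all deletion sets one may choose~$X$ with $X\cap I=\emptyset$, after which any $\binom{k}{2}$ surviving clique vertices must have at most~$k$ neighbours in~$I$, forcing a $k$-clique. You instead allow~$X$ to meet~$I$ and absorb this into a counting inequality: from $|X|\leq k'$ you deduce $|X\cap I|\leq |R|-\binom{k}{2}$, whence the big component has size at least $|R|+|V(R)|-|X\cap I|\geq |V(R)|+\binom{k}{2}$; the bound $\ell'$ then gives $|V(R)|\leq k$, and combined with $|R|\geq\binom{k}{2}$ this forces $V(R)$ to be a $k$-clique. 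Both arguments are short; yours avoids the exchange step at the cost of one extra inequality, while the paper's normalization makes the final counting slightly cleaner. Either is perfectly acceptable.
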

\begin{proof}
  The fact that \coc{} is $\W[1]$-hard on split graphs when
  parameterized by~$k$ readily follows from the observation that the
  reduction in the proof of Theorem~\ref{t-splitcoc} is
  parameter-preserving and the fact that \textsc{Clique} is
  $\W[1]$-hard when parameterized by the size of the
  solution~\cite{ParameterizedComplexity}.
  
  To prove that the problem is $\W[1]$-hard on split graphs when
  parameterized by~$\ell$, we give a slightly different reduction from
  \textsc{Clique}.  Let $(G,q)$ be an instance of \textsc{Clique}, and
  construct $G^\dagger = (V^\dagger,E^\dagger)$, where $V^\dagger =
  \{v_x \mid x \in V(G) \} \cup \{w_e \mid e \in E(G)\}$ and
  $E^\dagger = \{w_{e_1}w_{e_2} \mid e_1, e_2 \in E(G) \} \cup
  \{v_xw_e \mid \text{vertex } x \text{ incident to edge } e \text{ in
  } G\}$.  Define $C^\dagger = \{v_e \mid e \in E(G)\}$ and $I^\dagger
  = V^\dagger \setminus C^\dagger$.  We also define $k= |E(G)| -
  \binom{q}{2}$ and $\ell = \binom{q}{2} + q$.  We will show that
  $(G,q)$ is a yes-instance of \textsc{Clique} if and only if
  $(G^\dagger,k,\ell)$ is a yes-instance of COC.
  
  First assume $(G,q)$ is a yes-instance of \textsc{Clique}, and let
  $Q \subseteq V(G)$ be a clique of size~$q$.  Define $Q^\dagger =
  \{w_e \mid e=uv \text{ for } u,v \in Q\}$.  Let $X^\dagger =
  C^\dagger \setminus Q^\dagger$ and consider $|X^\dagger|$ and
  $G^\dagger - X^\dagger$.  Observe that $|X^\dagger| = |C^\dagger
  \setminus Q^\dagger| = |C^\dagger| - |Q^\dagger| = |E(G)| -
  \binom{q}{2} = k$.  Also note that the neighborhood of~$Q^\dagger$
  in~$I^\dagger$ has size exactly~$q$. Hence the component of
  $G^\dagger - X^\dagger$ containing the vertices of $Q^\dagger$ has
  $|Q^\dagger| + q = \binom{q}{2} + q = \ell$ vertices, while every
  other component of $G^\dagger-X^\dagger$ contains exactly one
  vertex.  This implies that $(G^\dagger,k,\ell)$ is a yes-instance of
  COC.
  
  For the reverse direction, suppose that $(G^\dagger,k,\ell)$ is a
  yes-instance of COC.  Then there exists a set $X^\dagger \subseteq
  V^\dagger$ such that $|X^\dagger| \leq k$ and $n(G^\dagger -
  X^\dagger) \leq \ell$; let us call such a set $X^\dagger$ a {\em
    deletion set}.  Without loss of generality, assume that among all
  deletion sets, $X^\dagger$ contains the smallest number of vertices
  from $I^\dagger$.  We claim that $X^\dagger \cap I^\dagger=\emptyset$, i.e.,
  $X^\dagger\subseteq C^\dagger$.
  
  For contradiction, suppose there is a vertex $v\in X^\dagger \cap
  I^\dagger$.  If all the neighbors of $v$ belong to $X^\dagger$, then
  $X^\dagger\setminus \{v\}$ is a deletion set, contradicting the
  choice of $X^\dagger$.  Hence we may assume that there exists a
  vertex $w \in N_{G^\dagger}(v) \setminus X^\dagger$.  Let $D$ be the
  component of $G^\dagger - X^\dagger$ containing~$w$.  Observe that
  every component of $G^\dagger - X^\dagger$ other than $D$ has
  exactly one vertex, so $|V(D)|=n(G^\dagger - X^\dagger)$.  Let
  $X'=X^\dagger\setminus \{v\}$, and let $D'$ be the component of
  $G^\dagger - X'$ containing $v$ and $w$.  It is clear that
  $|V(D')|=|V(D)|+1$ and all components of $G^\dagger - X'$ other than
  $D'$ contain exactly one vertex.  Finally, let
  $X''=(X^\dagger\setminus \{v\}) \cup \{w\}$.  Then every component
  of $G^\dagger - X''$ has at most $|V(D')|-1\leq |V(D)|$ vertices,
  implying that $n(G^\dagger - X'')\leq n(G^\dagger - X^\dagger)$.
  Hence $X''$ is a deletion set, contradicting the choice of
  $X^\dagger$.  This contradiction proves that $X^\dagger\subseteq C^\dagger$.
  
  Observe that $|C^\dagger \setminus X^\dagger|=|C^{\dagger}| - |X^\dagger|
  \geq |E(G)| - k = \binom{q}{2}$.  Let $Q^{\dagger}$ be any subset of
  $C^\dagger \setminus X^\dagger$ of size~$\binom{q}{2}$.  Let $D$ be
  the component of $G^\dagger-X^\dagger$ containing $Q^\dagger$.
  Since $X^\dagger$ is a deletion set, $|V(D)|\leq
  \ell=\binom{q}{2}+q$.  This implies that $Q^\dagger$ has at most $q$
  neighbors in $I^\dagger$.  
  By construction of~$G^\dagger$, it holds that~$Q^\dagger$ has
  exactly~$q$ neighbors in~$I^\dagger$.  These~$q$ neighbors
  correspond to a clique of size~$q$ in~$G$.
\end{proof}

On the positive side, our next result shows that both problems become
fixed-parameter tractable when parameterized by $k+\ell$.

\begin{theorem}
  \label{t-branching}
  \wcoc{} can be solved in time $O(\ell^k (k+\ell)n) = 2^{O(k \log
    \ell)} n$.
\end{theorem}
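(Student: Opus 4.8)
The plan is to design a bounded-depth branching algorithm analogous to the one used for \wvi{} in Theorem~\ref{t-vi-branching}, but tailored to the two-parameter structure of \wcoc{}. As in that proof, I would first clean up the instance by deleting all vertices of weight~$0$, which is safe because such vertices never need to be placed in a deletion set and never contribute to the weight of any component; after this step every vertex has weight at least~$1$, so that $|Y|\le w(Y)$ for every set $Y\subseteq V(G)$. This bound is the key quantitative fact that lets me control the depth of the search tree in terms of~$k$ rather than in terms of the total weight.

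The branching itself would search for a \emph{witness of infeasibility}: at each node I use a depth-first search to try to find a connected set $U\subseteq V(G)$ with $|U|\le \ell+1$ and $w(U)>\ell$, i.e.\ a connected vertex set that is too heavy to survive as part of a single component. If no such set exists, then every component of the current graph already has weight at most~$\ell$, so the empty set is a valid deletion set and (provided the remaining budget is non-negative) we accept. Otherwise, since any solution $X$ must reduce $\wcc(G-U')$ below $\ell$ for the component containing $U$, the set $X$ must contain at least one vertex of~$U$; I therefore branch on the $|U|\le \ell+1$ choices of which vertex $v\in U$ to delete, recursing on the instance $(G-v,\,w,\,k-w(v),\,\ell)$ and discarding any branch where $k-w(v)<0$. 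Because each branch decreases~$k$ by at least~$1$ (as $w(v)\ge 1$), the search tree has depth at most~$k$ and branching factor at most~$\ell+1$, giving $O(\ell^k)$ leaves.

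For the running time bookkeeping, I would note that at each node the depth-first search to find $U$ (or to certify that none exists) can be carried out in $O((k+\ell)n)$ time: one grows a connected region greedily until either its weight exceeds~$\ell$ or it saturates a whole component, and the number of vertices inspected along a single component before its weight passes~$\ell$ is at most $\ell+1$, while across the graph the total work is linear in the number of edges, which we may assume is $O((k+\ell)n)$ by the same pathwidth argument as in Theorem~\ref{t-vi-branching} (any yes-instance has $\wcc(G-X)\le\ell$ and $w(X)\le k$, so each $X\cup V(G_i)$ has weight at most $k+\ell$ and hence at most $k+\ell$ vertices, bounding the treewidth by $k+\ell-1$ and the edge count by $(k+\ell-1)n$). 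Multiplying $O(\ell^k)$ nodes by $O((k+\ell)n)$ work per node yields the claimed $O(\ell^k(k+\ell)n)$ bound, and since $\ell^k = 2^{k\log\ell}$, this is $2^{O(k\log\ell)}n$ as stated.

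The main subtlety I expect is justifying that the deletion set must intersect~$U$. The point is that $U$ is connected and entirely contained in one component of $G-X$ unless $X$ breaks it apart, but if $X\cap U=\emptyset$ then $U$ lies inside a single component $H$ of $G-X$ with $w(H)\ge w(U)>\ell$, contradicting $\wcc(G-X)\le\ell$; this is exactly the observation that a connected overweight set cannot coexist with a valid solution unless the solution touches it. I would also take care in the \emph{termination/acceptance} condition so that the algorithm accepts precisely when it reaches a node whose graph has all components of weight at most~$\ell$ \emph{and} whose remaining budget $k$ is non-negative, matching the definition of a yes-instance; the correctness then follows by a straightforward induction on the depth of the search tree, using the fact that every solution deletes at least one vertex of each overweight connected region it encounters.
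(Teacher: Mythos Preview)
Your proposal is correct and follows essentially the same approach as the paper: delete weight-$0$ vertices, bound the edge count via the treewidth argument, then branch on a connected set $U$ of at most $\ell+1$ vertices with $w(U)>\ell$, giving an $(\ell+1)$-ary search tree of depth at most~$k$. The paper's proof is slightly terser but contains no additional ideas beyond what you have outlined.
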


\begin{proof}
  Let $(G,w,k,\ell)$ be an instance of \wcoc, and let $n = |V(G)|$ and
  $m = |E(G)|$.  We assume that every vertex in $G$ has weight at
  least~$1$, as vertices of weight~$0$ can simply be deleted from the
  graph.  Suppose that $(G,w,k,\ell)$ is a yes-instance.  Then there
  exists a set $X\subseteq V(G)$ such that $w(X)\leq k$ and
  $\wcc(G-X)\leq \ell$.  Let $G_1, \ldots, G_r$ be the components of
  $G-X$.  We can construct a path decomposition of $G$ by taking as
  bags the sets $X \cup V(G_i)$ for all $i\in \{1,\ldots,r\}$.  Since
  every vertex has weight at least~$1$, we know that each bag contains
  at most $k+\ell$ vertices, implying that $G$ has treewidth at
  most~$k+\ell-1$.  Consequently, $G$ has at most $(k+\ell-1)n$
  edges~\cite{BodlaenderF05}.  We may therefore assume that $m\leq
  (k+\ell-1)n$, as our algorithm can safely reject the instance
  otherwise.
  
  We now describe a simple branching algorithm that solves the
  problem.  Now, at each step of the algorithm, we use a depth-first
  search to find a set $U \subseteq V(G)$ of at most~$\ell+1$ vertices
  such that $\wcc(G[U]) \geq \ell+1$ and $G[U]$ induces a connected
  subgraph.  If such a set does not exist, then every component of the
  graph has weight at most~$\ell$, so we are done.  Otherwise, we know
  that any solution contains a vertex of~$U$.  We therefore branch
  into $|U| \leq \ell+1$ subproblems: for every $v \in U$, we create
  the instance $(G-v, w, k-w(v), \ell)$, where we discard the instance
  in case $k-w(v) < 0$.  Since the parameter~$k$ decreases by at
  least~$1$ at each branching step, the corresponding search tree~$T$
  has depth at most~$k$.  Since~$T$ is an $(\ell+1)$-ary tree of depth
  at most~$k$, it has at most $((\ell+1)^{k+1}-1)/((\ell+1)-1) =
  O(\ell^k)$ nodes.  Due to the assumption that $m\leq (k+\ell-1)n$,
  the depth-first search at each step can be performed in time $O(n+m)
  = O((k+\ell)n)$.  This yields an overall running time of $O(\ell^k
  (k+\ell)n) = 2^{O(k \log \ell)} n$.
\end{proof}

We now show that the branching algorithm in Theorem~\ref{t-branching}
is in some sense best possible.  In order to make this statement
concrete, we need to introduce some additional terminology.

For $k\geq 3$, let $s_k$ be the infimum of the set of all positive real
numbers $\delta$ for which there exists an algorithm that solves
$k$-SAT in time $O(2^{\delta n})$, where $n$ denotes the number of
variables in the input formula.  The Exponential Time Hypothesis (ETH)
states that $s_k>0$ for any $k\geq 3$~\cite{ImpagliazzoP99,
  impagliazzo2001which}.  In particular, this implies that there is no
$2^{o(n)}$-time algorithm for solving $3$-SAT, unless the ETH fails.
Lokshtanov, Marx, and Saurabh~\cite{LokshtanovMS11} developed a
framework for proving lower bounds on the running time of
parameterized algorithms for certain natural problems, assuming the
validity of the ETH.  In order to obtain these results, they proved
lower bounds for constrained variants of some basic problems such as
the following:

\medskip
\indent \pname{$k\times k$ Clique}\\
\indent \emph{Instance:} \hspace*{-.01cm} A graph $G$, and a partition
${\cal X}$ of $V(G)$ into $k$ sets $X_1,\ldots,X_k$ of size~$k$
each.\\
\indent \emph{Question:} Does $G$ have a clique $K$ such that $|K\cap
X_i| = 1$ for all $i\in \{1,\ldots,k\}$?

\begin{theorem}[\cite{LokshtanovMS11}]
  \label{t-lowerbound}
  There is no $2^{o(k \log k)}$ time algorithm for $k \times k$
  \pname{Clique}, unless the ETH fails.
\end{theorem}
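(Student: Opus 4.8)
The plan is to reduce from \pname{3-Coloring} and exploit the Exponential Time Hypothesis together with the sparsification lemma of Impagliazzo, Paturi, and Zane~\cite{ImpagliazzoP99}. Recall that, under the ETH, \pname{3-SAT} on formulas with $n$ variables admits no $2^{o(n)}$-time algorithm; by the sparsification lemma we may moreover assume $m = O(n)$ clauses, and the standard linear reduction from \pname{3-SAT} to \pname{3-Coloring}~\cite{garey1979computers} then yields a graph $H$ on $N$ vertices with $O(N)$ edges on which \pname{3-Coloring} cannot be solved in $2^{o(N)}$ time, unless the ETH fails. The feature I want to exploit is that \pname{3-Coloring} is a \emph{binary} constraint satisfaction problem: every constraint (an edge of $H$) involves exactly two vertices. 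This is precisely what makes \pname{$k\times k$ Clique} a suitable target, since a clique is determined by \emph{pairwise} adjacencies.

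The core of the reduction is to compress $H$ into a binary CSP with only $k=\Theta(N/\log N)$ variables. First I would partition $V(H)$ into $g$ groups of $s \le \lfloor \log_3 k\rfloor$ vertices each, so that $3^{s}\le k$ and $g = \lceil N/s\rceil$. Choosing $k=\Theta(N/\log N)$ makes $g \le k$ while keeping $k\log k = \Theta(N)$ (using $\log(N/\log N)=\Theta(\log N)$). For each group I create a part $X_i$ whose vertices are the proper $3$-colorings of $H$ restricted to that group; since there are at most $3^{s}\le k$ such colorings, I can pad each part with duplicate colorings to reach exactly $k$ vertices, and I add $k-g$ dummy parts (each consisting of $k$ mutually and universally adjacent vertices) so that there are exactly $k$ parts of size $k$. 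I then place an edge between a vertex of $X_i$ (a partial coloring $\gamma_i$) and a vertex of $X_j$ (a partial coloring $\gamma_j$), $i\ne j$, precisely when $\gamma_i$ and $\gamma_j$ are consistent with every edge of $H$ running between the two groups. Because every edge of $H$ lies inside a single group or between exactly two groups, pairwise compatibility of the chosen partial colorings is equivalent to the selection forming a proper $3$-coloring of $H$; hence $H$ is $3$-colorable if and only if the constructed graph has a clique meeting each part exactly once. All of this is clearly computable in time polynomial in $N$.

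Finally, I would close the argument by a running-time comparison. A hypothetical $2^{o(k\log k)}$-time algorithm for \pname{$k\times k$ Clique}, run on the instance above with $k=\Theta(N/\log N)$, would decide $3$-colorability of $H$ in time $2^{o(k\log k)}=2^{o(N)}$, contradicting the lower bound for \pname{3-Coloring} derived in the first step, and hence the ETH.

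I expect the main obstacle to be arranging the parameters so that the target instance is genuinely ``square'' while preserving $k\log k=\Theta(N)$: one must simultaneously guarantee that each domain fits within $k$ values (forcing the group size to be about $\log_3 k$ rather than $\log_2 k$), that the number of groups does not exceed $k$, and that the padding with duplicate colorings and dummy parts introduces neither spurious cliques nor destroys genuine ones. Choosing $k=\Theta(N/\log N)$ and handling the degenerate case of a group that is itself not $3$-colorable (which immediately certifies a ``no''-instance) resolves these issues, but getting the constants and the padding exactly right is where the care is needed.
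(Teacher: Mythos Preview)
The paper does not give its own proof of this theorem: it is stated as a black box and attributed to Lokshtanov, Marx, and Saurabh~\cite{LokshtanovMS11}, and is then invoked only to derive Theorem~\ref{t-starspace}. There is therefore nothing in the present paper to compare your argument against.

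That said, your sketch is essentially the standard argument from~\cite{LokshtanovMS11}: group the $N$ vertices of an ETH-hard instance of \pname{3-Coloring} into blocks of size $\Theta(\log N)$, let each part $X_i$ enumerate the $3^{\Theta(\log N)}$ partial colorings of one block, and encode pairwise compatibility by edges between parts, so that a transversal clique corresponds to a global proper $3$-coloring. Your parameter choice $k=\Theta(N/\log N)$ with block size $s\le\lfloor\log_3 k\rfloor$ makes the instance square and gives $k\log k=\Theta(N)$, which is exactly what is needed. The padding by duplicate colorings and universally-adjacent dummy parts is harmless for both directions of the equivalence, and the corner case of a block whose induced subgraph is not $3$-colorable is correctly disposed of. So the proposal is sound and follows the original source; it simply supplies a proof where the present paper deliberately omits one.
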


Recall that the \wcoc{} problem can be solved in time $2^{O(k\log
  \ell)} n$ on general graphs.  We now show that the problem does not
admit a $2^{o(k\log \ell)} n^{O(1)}$-time algorithm, even when all the
vertices have unit weight and the input graph is a split graph, unless
the ETH fails.

\begin{theorem}
  \label{t-starspace}
  There is no $2^{o(k\log \ell)} n^{O(1)}$ time algorithm for \coc{},
  even when restricted to split graphs, unless the ETH fails.
\end{theorem}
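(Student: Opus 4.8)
The plan is to reduce from \pname{$k\times k$ Clique}, whose $2^{o(k\log k)}$ lower bound under the ETH is supplied by Theorem~\ref{t-lowerbound}, reusing the incidence-split-graph construction behind Theorem~\ref{t-splitcoc} but carefully tracking how the two parameters of the produced \coc{} instance scale with the clique parameter. The aim is a polynomial-time reduction that turns an instance of \pname{$k\times k$ Clique} with parameter~$k$ into a \coc{} instance $(G^*,\kappa,\lambda)$ on a split graph with $\kappa = O(k)$ and $\lambda = k^{O(1)}$, so that $\kappa\log\lambda = O(k\log k)$; any $2^{o(\kappa\log\lambda)}n^{O(1)}$ algorithm for \coc{} would then decide \pname{$k\times k$ Clique} in $2^{o(k\log k)}$ time, contradicting the ETH.

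For the reduction itself, let $(G,{\cal X})$ be an instance of \pname{$k\times k$ Clique}, where $V(G)$ is partitioned into $k$ groups of size~$k$. First I would observe that a \emph{colorful} clique (one vertex per group) never uses an edge lying inside a single group, so I may safely delete all intra-group edges; writing $G'$ for the resulting graph, every group becomes an independent set. Consequently a colorful $k$-clique of $G$ is precisely an ordinary $k$-clique of $G'$, since any $k$ pairwise-adjacent vertices of $G'$ must meet each of the $k$ independent groups exactly once. I then build the incidence split graph $G^* = (C^*,I^*,E^*)$ of $G'$ exactly as in Theorem~\ref{t-splitcoc} and set $\kappa = k$ and $\lambda = |V(G')| + |E(G')| - \binom{k}{2} - k$. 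By the equivalence of statements (i) and (iii) in Lemma~\ref{l-incidence}, $G'$ has a $k$-clique if and only if there is a set $X\subseteq C^*$ with $|X|\le k$ and $n(G^*-X)\le\lambda$. To match the definition of \coc, in which the deletion set may be any vertex subset, I would argue that deleting vertices of the independent set $I^*$ is never advantageous: each such edge-vertex has only two neighbours, both in $C^*$, so replacing an $I^*$-deletion by a $C^*$-deletion can only enlarge the heaviest surviving component (formally, using part of the budget on $I^*$ strictly weakens the $\binom{\cdot}{2}$ bound on how many edge-vertices can be isolated). Hence an optimal deletion set may be assumed to lie in $C^*$, and $(G,{\cal X})$ is a yes-instance if and only if $(G^*,\kappa,\lambda)$ is a yes-instance of \coc{} on the split graph $G^*$.

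It remains to do the parameter accounting. Since $|V(G')| = k^2$ and $|E(G')| \le \binom{k^2}{2}$, we have $\lambda = O(k^4)$, so $\log\lambda = O(\log k)$ while $\kappa = k$; thus $\kappa\log\lambda = O(k\log k)$, and $G^*$ has only $k^{O(1)}$ vertices. Suppose, for contradiction, that \coc{} could be solved in $2^{o(k'\log\ell')}n^{O(1)}$ time on split graphs, where $k'$ and $\ell'$ denote the two problem parameters. Running this algorithm on $(G^*,\kappa,\lambda)$ would decide \pname{$k\times k$ Clique} in time $2^{o(\kappa\log\lambda)}|V(G^*)|^{O(1)} = 2^{o(k\log k)}\cdot k^{O(1)} = 2^{o(k\log k)}$, contradicting Theorem~\ref{t-lowerbound} and hence the ETH.

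The step I expect to require the most care is the parameter control rather than the combinatorics: I must ensure that the component budget $\lambda$ stays polynomial in $k$, so that the factor $\log\lambda$ contributes only an $O(\log k)$ term and the product $\kappa\log\lambda$ does not overshoot $O(k\log k)$. The faithful enforcement of the colorful constraint is the second point to get right, and I handle it cleanly by the intra-group edge deletion, which makes plain $k$-cliques of $G'$ coincide with colorful $k$-cliques of $G$ and thereby lets me invoke Lemma~\ref{l-incidence} verbatim.
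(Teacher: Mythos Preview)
Your proposal is correct and follows essentially the same route as the paper: reduce from \pname{$k\times k$ Clique}, delete intra-group edges so that ordinary $k$-cliques coincide with colorful ones, pass to the incidence split graph, and observe that the resulting \coc{} instance has $\kappa=k$, $\lambda=k^{O(1)}$, and $|V(G^*)|=k^{O(1)}$, so a $2^{o(\kappa\log\lambda)}n^{O(1)}$ algorithm would violate Theorem~\ref{t-lowerbound}. Your version is in fact slightly more careful than the paper's, both in choosing $\lambda=n+m-\binom{k}{2}-k$ to match statement~(iii) of Lemma~\ref{l-incidence} exactly and in explicitly justifying why an optimal deletion set may be assumed to lie in~$C^*$.
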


\begin{proof}
  For contradiction, suppose there exists an algorithm $\mathbb{A}$
  for solving the \coc{} problem in time $2^{o(k \log \ell)}
  n^{O(1)}$.  Let $(G,{\cal X})$ be an instance of the $k\times k$
  \pname{Clique} problem, where ${\cal X} = \{X_1,\ldots,X_k\}$.  We
  assume that~$G$ contains no edge whose endpoints belong to the same
  set~$X_i$, as an equivalent instance can be obtained by deleting all
  such edges from $G$.  Due to this assumption, it holds that
  $(G,{\cal X})$ is a yes-instance of $k\times k$ \pname{Clique} if
  and only if~$G$ contains a clique of size~$k$.
  
  Now let $G^* = (C^*,I^*,E^*)$ be the incidence split graph of $G$,
  and let $\ell = |V(G)|+|E(G)|-\binom{k}{2}$.  By the definition of
  the $k\times k$ \pname{Clique} problem, we have that $|V(G)| = k^2$
  and $|E(G)|\leq k^2(k^2-1)/2$.  This implies that the graph $G^*$
  has at most $k^2+k^2(k^2-1)/2\leq k^4$ vertices, and that $\ell\leq
  k^4$.  By Lemma~\ref{l-incidence}, it holds that $(G^*,k,\ell)$ is a
  yes-instance of \coc{} if and only if $G$ has a clique of size~$k$.
  Hence, using algorithm~$\mathbb{A}$, we can decide in time $2^{o(k
    \log k^4)} k^{O(1)} = 2^{o(k \log k)}$ whether or not $(G,{\cal
    X})$ is a yes-instance of $k\times k$ \pname{Clique}, which by
  Theorem~\ref{t-lowerbound} is only possible if the ETH fails.
\end{proof}

We conclude this section by showing that the \wcoc{} problem admits a
polynomial kernel.  The arguments in the proof of
Theorem~\ref{t-coc-kernel} are similar to, but slightly different
from, those in the proof of Theorem~\ref{t-vikernel}.

\begin{theorem}
  \label{t-coc-kernel}
  \wcoc{} admits a kernel with at most $k\ell(k+\ell)+k$ vertices,
  where each vertex has weight at most~$k+\ell$.
\end{theorem}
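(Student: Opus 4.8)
The plan is to mirror the kernelization in the proof of Theorem~\ref{t-vikernel}, adapting its two reduction rules to the two-parameter setting of \wcoc. As a preprocessing step I would delete every vertex of weight~$0$: if $X$ is a solution then so is $X\cup\{v\}$ for any weight-$0$ vertex $v$, since the weight of the deletion set is unchanged and deleting more vertices cannot increase any component weight. Thus there is always a solution containing $v$, and deleting $v$ is safe. After this step every vertex has weight at least~$1$, so $|Y|\le w(Y)$ for every $Y\subseteq V(G)$; in particular every component $H$ of $G-X$ in a solution $X$ satisfies $|V(H)|\le w(V(H))\le\ell$.

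Next I would introduce two reduction rules. The first deletes any connected component $C$ of $G$ with $w(V(C))\le\ell$: restricting a solution of $G$ to $G-C$ gives a solution of $G-C$, and conversely any solution of $G-C$ remains a solution of $G$, because adding back the untouched component $C$ merely creates one further component of weight at most~$\ell$. The second rule is the analogue of the closed-neighbourhood rule: if some vertex $v$ satisfies $w(N_G[v])>k+\ell$, then $v$ must belong to every solution, so I would delete $v$ and decrease $k$ by $w(v)$, returning a trivial no-instance if this makes $k$ negative. Safety follows since, for a solution $X$ with $v\notin X$, every vertex of $N_G[v]$ lies in $X$ or in the component $H$ of $G-X$ containing $v$, whence $w(N_G[v])\le w(X)+\wcc(G-X)\le k+\ell$. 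I would apply both rules exhaustively and let $(G',w,k',\ell)$ be the result, with $k'\le k$. At this fixed point every component of $G'$ has weight strictly more than~$\ell$, and every vertex satisfies $w(N_{G'}[v])\le k'+\ell\le k+\ell$; the latter already gives the claimed weight bound, as $w(v)\le w(N_{G'}[v])$.

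The heart of the argument is the size bound: I claim that if $(G',w,k',\ell)$ is a yes-instance then $|V(G')|\le k\ell(k+\ell)+k$. Fix a solution $X$, so $|X|\le w(X)\le k'$. Since the first rule no longer applies, every component of $G'$ has weight exceeding~$\ell$ and therefore must meet $X$, for otherwise it would survive untouched in $G'-X$ and violate $\wcc(G'-X)\le\ell$. As each component of $G'$ is connected and meets $X$, every component of $G'-X$ is adjacent to a vertex of $X$. Since the second rule no longer applies, each $v\in X$ has degree at most $w(N_{G'}[v])-1\le k'+\ell-1$, hence is adjacent to at most $k'+\ell$ components; so $G'-X$ has at most $k'(k'+\ell)$ components, each on at most~$\ell$ vertices. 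Counting the vertices of $X$ as well yields $|V(G')|\le k'(k'+\ell)\cdot\ell+k'\le k\ell(k+\ell)+k$. If the reduced instance exceeds this bound it must be a no-instance, and I would output a trivial no-instance. Finally I would check that the procedure runs in polynomial time, since each rule application strictly decreases the number of vertices or the parameter~$k$ and each rule is clearly executable in polynomial time.

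I expect the only delicate point to be the counting step, and specifically the claim that once the light components have been stripped away, \emph{every} component of $G'-X$ is adjacent to $X$; this is exactly what lets the degree bound supplied by the second rule control the number of components. Verifying the safety of the light-component rule and then matching the constants to obtain precisely $k\ell(k+\ell)+k$ is routine once that structural observation is in place.
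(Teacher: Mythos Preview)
Your proposal is correct and follows essentially the same approach as the paper: the same weight-$0$ preprocessing, the same two reduction rules (light-component deletion and forced deletion of vertices with $w(N[v])>k+\ell$), and the same counting argument bounding the number of components of $G'-X$ via the degree bound on $X$. The only cosmetic differences are the order in which you introduce the two rules and your use of $k'$ throughout before bounding by $k$ at the end.
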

\begin{proof}
  We describe a kernelization algorithm for the problem.  Let
  $(G,w,k,\ell)$ be an instance of \wcoc{}.  We first delete all
  vertices of weight~$0$ without changing the parameters.  Observe
  that after this first preprocessing step, the weight of every vertex
  is at least~$1$.  This implies in particular that $|X|\leq w(X)$ for
  every set $X\subseteq V(G)$.
  
  We now apply the following two reduction rules.  If $G$ contains a
  vertex $v$ such that $w(N_G[v])> k+\ell$, then we delete $v$ from
  $G$ and decrease $k$ by~$w(v)$, unless $w(v)>k$, in which case we
  output a trivial no-instance.  To see why this rule is safe, let us
  first show that $v$ belongs to any solution for the
  instance~$(G,w,k,\ell)$ if such a solution exists.  This follows
  from the observation that deleting any set $X\subseteq V(G)\setminus
  \{v\}$ with $w(X)\leq k$ from $G$ yields a graph $G'$ such that
  $w(N_{G'}[v])>\ell$.  For the same reason, there exists no solution
  if $w(v)>k$.  Our second reduction rule deletes any component $H$ of
  weight at most~$\ell$ from $G$ without changing either of the
  parameters.  This rule is safe due to the fact that $\wcc(H) \leq
  \ell$ implies that no minimum solution deletes any vertex from $H$.
  
  Let $(G',w,k',\ell)$ denote the instance that we obtain after
  exhaustively applying the above reduction rules, where $w$ denotes
  the restriction of the original weight function to the vertices of
  $G'$.  Observe that $k'\leq k$, while the parameter $\ell$ did not
  change in the kernelization process.  Suppose $X$ is a solution for
  this instance.  Then $w(X)\leq k'\leq k$, which implies that $X$
  contains at most~$k$ vertices.  For every component $H$ of $G'-X$,
  it holds that $|H| \leq \wcc(H) \leq \ell$, furthermore
  $H$ 
  is adjacent to at least one vertex of $X$, as otherwise our second
  reduction rule could have been applied.  Moreover, the fact that the
  first reduction rule cannot be applied guarantees that
  $w(N_G[v])\leq k+\ell$ for every $v\in V(G')$.  In particular, this
  implies that every vertex in $X$ has degree at most~$k+\ell$.  We
  find that $G-X$ has at most $k(k+\ell)$ components, each containing
  at most $\ell$ vertices.  We conclude that if $(G',w,k',\ell')$ is a
  yes-instance, then $|V(G')|\leq k\ell(k+\ell)+k$.  The observation
  that each vertex in $G'$ has weight at most~$k+\ell$ due to the
  first reduction rule completes the proof.
\end{proof}

\section{Concluding Remarks}


We showed that the \coc{} problem does not admit a $2^{o(k\log \ell)}
n^{O(1)}$ time algorithm, unless the ETH fails.  Can the problem be
solved in time $c^{k+\ell} n^{O(1)}$ for some constant $c$?
Similarly, it would be interesting to investigate whether it is
possible to solve \vi{} in time $c^p n^{O(1)}$ for some constant~$c$,
that is, does there exist a single-exponential time algorithm solving
\vi{}?

\section{Acknowledgements}
The authors are grateful to Daniel Lokshtanov for pointing out that
\pname{$k \times k$ Clique} admits no $2^{o(k \log k)}$ time algorithm
unless ETH fails.

\bibliographystyle{alpha}

\newcommand{\etalchar}[1]{$^{#1}$}

\end{document}